\newtheorem{Theorem}{Theorem}[section]
\newtheorem{pro}[Theorem]{Proposition}
\newtheorem{thm}[Theorem]{Theorem}
\newtheorem{dfn}[Theorem]{Definition}
\newtheorem{lem}[Theorem]{Lemma}
\newenvironment{proof}[1][Proof]{\noindent\textbf{#1.} }{\ \rule{0.5em}{0.5em}}
\newcommand{\bpartial}{\mathop{\partial\kern -4pt\raisebox{.8pt}{$|$}}}
\newcommand{\bra}{\mathopen{[\kern-1.6pt[}}
\newcommand{\ket}{\mathclose{]\kern-1.5pt]}}
\newcommand{\bbra}{\mathopen{[\kern-2.2pt[\kern-2.3pt[}}
\newcommand{\bket}{\mathclose{]\kern-2.1pt]\kern-2.3pt]}}
\newcommand*{\rom}[1]{\expandafter\@slowromancap\romannumeral #1@}
\begin{document}

\title {\large{\bf   Jacobi-Lie Hamiltonian systems  
on
 real low-dimensional 
Jacobi-Lie  groups  and  their
Lie symmetries 
 }}
\vspace{3mm}
\author { \small{ \bf H. Amirzadeh-Fard$^1$ }\hspace{-2mm}{ \footnote{ e-mail: h.amirzadehfard@azaruniv.ac.ir}}, 
 \small{ \bf  Gh. Haghighatdoost$^2$}\hspace{-1mm}{ \footnote{ e-mail: gorbanali@azaruniv.ac.ir}},
  \small{ \bf A.
Rezaei-Aghdam$^3$ }\hspace{-1mm}{\footnote{ e-mail: rezaei-a@azaruniv.ac.ir }}\\
{\small{$^{1,2}$\em Department
of Mathematics, Azarbaijan Shahid Madani University, 53714-161, Tabriz, Iran}}\\
{\small{$^{3}$\em Department of Physics, Azarbaijan Shahid Madani
University, 53714-161, Tabriz, Iran}}\\
  }
 \maketitle
\begin{abstract}

We study Jacobi-Lie   Hamiltonian systems  admitting  Vessiot-Guldberg Lie algebras of Hamiltonian vector
fields related to  Jacobi structures on
 real low-dimensional 
Jacobi-Lie  groups. Also,
 we find some examples
  of  Jacobi-Lie   Hamiltonian systems  on
real two- and three- dimensional Jacobi-Lie groups.
Finally,  we present 
Lie symmetries of Jacobi-Lie  Hamiltonian systems on some three-dimensional real Jacobi-Lie groups.
\end{abstract}

\smallskip

{\bf keywords:}{\;Jacobi-Lie  group, Jacobi manifold, Lie system,  Jacobi-Lie Hamiltonian system, Lie symmetry.

%%%%%%%%%%%%%%%%%%%%%%%%%%%%%%%%%%%%%%%%%%%%%%%%%%%%%%%%%%%%%%%%%%%%%%%%%%%%%%%%%%
\section {\large {\bf Introduction}}
A Lie system is a non-independent system of  first-order ODEs that possesses  a superposition rule. In other words,
  a Lie system amounts to a  non-autonomous vector field that  takes values in a finite-dimensional real Lie algebra of vector fields,
 referred to as Vessiot-Guldberg Lie algebra 
 (VG Lie algebra)
of the system,  with respect to a geometric
structure.

At the end of the 19th century, the study
 of Lie systems was
carried out by Sophus Lie,
  who  pioneered the study of systems of
ODEs
 admitting superposition rules \textcolor{red}{[\ref{ref8}]}.
 Then about a century, the study of this problem  was silent.
 After the work of
 Winternitz \textcolor{red}{[\ref{ref52}]},
many authors have recently  
 investigated  this problem \textcolor{red}{[\ref{ref53}-\ref{ref212}]}.
Some results have  been obtained for Lie systems admitting
a VG Lie algebra of Hamiltonian vector fields relative to  
symplectic and Poisson structures  
\textcolor{red}{  [\ref{ref7}]}.

 A particular class of Lie systems on Poisson manifolds, the so-called Lie-Hamilton systems, that admit a VG Lie algebra of Hamiltonian vector fields with
respect to a Poisson structure was studied in \textcolor{red}{[\ref{ref7}]}.
Recently Lie systems possessing a VG Lie algebras of Hamiltonian
vector fields with respect  to Jacobi structures \textcolor{red}{[\ref{ref3}, \ref{ref4}, \ref{ref1}]} 
 were 
 referred to as the Jacobi-Lie systems are studied and exactly introduced in \textcolor{red}{[\ref{ref9}]}.

It is well known that
the symplectic
  manifold  is a particular case of the
  Poisson manifold 
so that the  Poisson bracket
 is not necessarily assumed to be non-degenerate.
In order that the Jacobi bracket is not necessarily assumed to be derivation,
 the
Jacobi manifold
  is a  generalization of
  the  Poisson manifold
  \textcolor{red}{[\ref{ref3}, \ref{ref4}]}. 
 To wrap up the discussion,
 Lie-Hamiltonian systems
  are  a  generalization of  Hamiltonian systems
and
 a particular case of
 the   Jacobi-Lie Hamiltonian systems.

In this work, we  study Lie systems with VG Lie algebras
of Hamiltonian vector fields with respect to Jacobi-Lie groups
 \textcolor{red}{[\ref{ref1}]},
 especially
 on real two and three-dimensional Jacobi-Lie
 groups   \textcolor{red}{[\ref{ref213}, \ref{ref2}]}. 

The outline of the  paper  is as follows: 
In section 2, we recall several definitions
and results in  Lie systems  and Jacobi structures on  Jacobi-Lie
 groups  \textcolor{red}{[\ref{ref1}]}
and  Jacobi-Lie  Hamiltonian system on a Jacobi manifold  \textcolor{red}{[\ref{ref9}]}.
  In Section 3 we exemplify results of Sections 2 on real two- and three-dimensional 
Jacobi-Lie  groups \textcolor{red}{[\ref{ref213}, \ref{ref2}]}. 

Finally,  in Section 4 we study 
Lie symmetry of Jacobi-Lie  Hamiltonian system on some three-dimensional real Jacobi-Lie groups.
%%%%%%%%%%%%%%%%%%%%%%%%%%%%%%%%%%%%%%%%%%%%%%%%%%%%%%%%%%%%%%%%%%%%%%%%%%%%%%%%%%%%%%%%%%%%%
 \section{ Definitions and Notations on Lie and Jacobi-Lie Hamiltonian system}
For simplicity, 
all functions and geometric
structures throughout this paper
are assumed  to be real, smooth, and globally defined. 
In order to highlight the main aspects of our results, let
 us  omit  minor technical problems. 
 Here,
for self-containment  of the paper, we review some  basic
concepts of Lie,  Lie-Hamiltonian \textcolor{red}{[\ref{ref7}, \ref{ref5}]} and 
Jacobi-Lie Hamiltonian systems \textcolor{red}{ [\ref{ref9}]}. 
%%%%%%%%%%%%%%%%%%%%%%%%%%%%%%%%%%%%%%%%
\subsection{ Lie systems and Lie-Hamiltonian systems}

Let
$\mathfrak{a}$ and $\mathfrak{b}$
be two vector  subspaces of Lie algebra $\mathfrak{g},$
and let $[\mathfrak{a} ,\mathfrak{b}]$ denote the  vector
space spanned by the Lie brackets between elements of $\mathfrak{a}$ and $\mathfrak{b},$
 respectively.
 We define $Lie(\mathfrak{a},\mathfrak{g} , [., .])$ to be the smallest Lie subalgebra of
$(\mathfrak{g} , [., .])$  containing $\mathfrak{a}$
and represent  it  with
 $Lie(\mathfrak{a})$.
\begin{dfn}\label{df1}
A t-dependent vector field on a manifold  $M$ is a map 
\begin{equation}\label{2.1}
X:  \mathbb{R} \times M\longrightarrow TM, \quad (t, x)\mapsto X(t, x),
\end{equation}
satisfying $ \tau_M \tiny{o} X = \pi_{2},$
 where  
 $ \pi_{2} $ and  $\tau_M$
   are the projections from $\mathbb{R} \times M $ and $TM $ onto $M,$  respectively. 
\end{dfn}

Using this definiition, we can identify every  $t$-dependent vector field
with   a family $\lbrace X_t \rbrace _{t\in \mathbb{R}} $ of vector fields
$
X _{ t}:   M\longrightarrow TM, \quad  x\mapsto X_{t} (x)=X(t, x),
$
and vice versa.

\begin{dfn}\label{}
  The minimal Lie algebra of $ X$ on a manifold $M$ is the smallest
real Lie algebra, $\mathfrak{g}^X,$ containing the  vector fields $\lbrace X_{t}\rbrace _{t\in \mathbb{R}}.$ In other words,
$\mathfrak{g}^X=Lie (\lbrace X_{t}\rbrace _{t\in \mathbb{R}} )$.
\end{dfn}

\begin{dfn}\label{}
An integral curve of $X$ is an 
 integral curve
$\alpha:  \mathbb{R}\longrightarrow \mathbb{R} \times M, \quad  t\mapsto (t, x(t)),$
of the suspension of $X$\textcolor{red} { [\ref{ref6}]}, i.e.  for the vector field
\begin{equation}\label{2.1}
\tilde{X}:  \mathbb{R} \times M\longrightarrow T(\mathbb{R} \times M)\simeq T\mathbb{R}\oplus TM, \qquad (t, x)\mapsto\dfrac{\partial}{\partial t}+X(t, x),
\end{equation}
we have
$\dfrac{dx(t)}{dt}=X(t,x).$
\end{dfn}

\begin{dfn}
A Lie system is a system
$X$
on a manifold  $M$
whose
$\mathfrak{g}^X$
is finite-dimensional \textcolor{red}{[\ref{ref7}]}.
\end{dfn}

\begin{dfn}\label{2.4}
A superposition rule
depending on n particular solutions
 for a system $X$ on a manifold $M$ is a function $\Psi : M^{n} \times M \longrightarrow M,$
 $(x_{(1)}, . . . , x_{(n)}; k)\longmapsto  x,$ 
 such  that the general solution $x(t)$ of $X$ can be
written as
\begin{equation*}
x(t) = \Psi(x_{(1)}(t), . . . , x_{(n)}(t); k), 
\end{equation*}
where $x_{(1)}(t), . . . , x_{(n)}(t) $ is any generic
collection
 of particular solutions to $X$ and $ k =
(k_1, . . . , k_m)$ 
is a point of M to
be related to initial conditions
 of $X$ \textcolor{red}{[\ref{ref8}]}.
\end{dfn}
\begin{thm}\label{th2.2}
{\bf (The Lie-Scheffers Theorem) }
 A system $X$ on $M$ admits a superposition
rule if and only if
 it can
be written in the form
\begin{equation}\label{2.5}
X(t, x)=\sum^r_{i=1} b_i(t)X_i( x)
\end{equation}
for a set $b_1(t), . . . , b_r(t)$ of t-dependent functions and a family of vector fields $X_1, . . . , X_r $ on
M spanning an r-dimensional real Lie algebra: a VG Lie algebra of X.
 That is to say,
 a system X on $M$ possesses a superposition rule if
and only if it is a Lie system
\textcolor{red}{ [\ref{ref8},\ref{ref11111}]}.
\end{thm}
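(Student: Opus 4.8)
The plan is to prove the two implications separately, converting the analytic notion of a superposition rule into the integrability of an involutive distribution by means of diagonal prolongations to the products $M^{n}$ and $M^{n+1}$ (where $\widetilde{Y}$ denotes the vector field on $M^{k}$ obtained by letting a field $Y$ on $M$ act simultaneously on each of the $k$ factors). For the direction $(\Leftarrow)$, I would assume $X$ is a Lie system, so that by definition $\mathfrak{g}^X = Lie(\{X_t\}_{t\in\mathbb{R}})$ is a finite-dimensional real Lie algebra with basis $X_1,\ldots,X_r$ obeying $[X_i,X_j]=\sum_k c_{ij}^k X_k$, and $X_t=\sum_{i=1}^r b_i(t)X_i$ for suitable $t$-dependent $b_i$. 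The prolongations $\widetilde{X}_i$ inherit the same structure constants $c_{ij}^k$, so they span a finite-dimensional, hence involutive, distribution $\mathcal{D}$ on $M^{n+1}$. Choosing $n$ large enough that $\mathcal{D}$ has corank at least $\dim M$ on a dense open set, the Frobenius theorem supplies $\dim M$ functionally independent common first integrals $F_1,\ldots,F_{\dim M}:M^{n+1}\to\mathbb{R}$, which I would arrange so that, for generic fixed values of the first $n$ factors, the map sending the last factor $x$ to $(F_1,\ldots,F_{\dim M})$ is a local diffeomorphism. Solving $F_j(x_{(1)},\ldots,x_{(n)};x)=k_j$ for $x$ then yields $x=\Psi(x_{(1)},\ldots,x_{(n)};k)$; since the prolonged suspension $\partial_t+\widetilde{X}_t$ is tangent to every level set $\{F_j=k_j\}$, whenever $x_{(1)}(t),\ldots,x_{(n)}(t)$ solve $X$ the curve $x(t)=\Psi(x_{(1)}(t),\ldots,x_{(n)}(t);k)$ solves $X$ as well, so $\Psi$ is the required superposition rule.

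For the converse $(\Rightarrow)$, I would start from a superposition rule $\Psi$ and recover the finite-dimensionality of $\mathfrak{g}^X$. Through the relation $x=\Psi(x_{(1)},\ldots,x_{(n)};k)$ the rule determines a foliation of $M^{n+1}$ of codimension $\dim M$ whose leaves are labelled by the constants $k$, and the defining property of a superposition rule is exactly that the prolonged suspension is tangent to every leaf for each admissible collection of particular solutions, hence for every value of $t$. This forces each $\widetilde{X}_t$ into the space of vector fields tangent to the foliation, and a standard argument on diagonal prolongations transfers the resulting pointwise bound back to $M$, showing that the linear span of $\{X_t\}_{t\in\mathbb{R}}$, and therefore its Lie closure $\mathfrak{g}^X$, is finite-dimensional. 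Expanding $X_t$ in a basis $X_1,\ldots,X_r$ of $\mathfrak{g}^X$ then gives precisely the form \eqref{2.5}, so $X$ is a Lie system.

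The main obstacle is the genericity bookkeeping in the direction $(\Leftarrow)$: one must choose $n$ so that the prolonged distribution has exactly the right corank and that the chosen first integrals can genuinely be solved for the last factor, i.e. that $\Psi$ is well defined as an honest map $M^n\times M\to M$ rather than merely an implicit relation. Controlling this rests on the observation that the prolongations to $M^{n}$ become locally free once $n\,\dim M\geq\dim\mathfrak{g}^X$, which is where the finite-dimensionality hypothesis enters essentially; the tangency verification and the Frobenius integration are then routine.
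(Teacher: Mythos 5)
The paper offers no proof of this theorem: it is stated as a known result, with the proof deferred to the references [\ref{ref8},\ref{ref11111}], and your sketch follows precisely the diagonal-prolongation strategy of the second of these, so the comparison must be made with that standard argument, whose overall architecture you reproduce correctly. Two of your steps, however, fail as written. In the direction $(\Leftarrow)$, your criterion for pointwise independence of the prolongations --- ``locally free once $n\dim M\geq\dim\mathfrak{g}^X$'' --- is false, and it is exactly the point you yourself flag as the main obstacle. Take $M=\mathbb{R}^{2}$ with $X_{1}=\partial_{x}$ and $X_{2}=f(y)\partial_{x}$ for a non-constant $f$: these span a two-dimensional abelian VG Lie algebra, the count $n\dim M\geq 2$ permits $n=1$, yet $X_{1}$ and $X_{2}$ are proportional at every point of $M$, so they span only a rank-one distribution and no corank/graph argument can begin; one must pass to $M^{2}$, where $\partial_{x_{1}}+\partial_{x_{2}}$ and $f(y_{1})\partial_{x_{1}}+f(y_{2})\partial_{x_{2}}$ are generically independent. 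What is actually needed is the lemma of [\ref{ref11111}] that the diagonal prolongations of $r$ linearly independent vector fields to $M^{r}$ become linearly independent on an open dense subset. The same lemma is what legitimizes your claim that the integrals $F_{j}$ can be solved for the last factor: that requires the prolonged distribution on $M^{n+1}$ to have the same generic rank as its projection to $M^{n}$, i.e. to contain no directions purely along the last factor, and this is not a consequence of any dimension count.

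In the direction $(\Rightarrow)$, the step ``the linear span of $\{X_{t}\}_{t\in\mathbb{R}}$, and therefore its Lie closure $\mathfrak{g}^{X}$, is finite-dimensional'' is a non sequitur: finite-dimensionality of a linear span says nothing about the Lie algebra it generates, since two vector fields may already generate an infinite-dimensional Lie algebra. The argument that works bounds a bracket-closed space from the start: let $V$ be the set of all vector fields $Y$ on $M$ whose diagonal prolongation to $M^{n+1}$ is tangent to the foliation determined by $\Psi$. Then $V$ is a Lie algebra, because prolongation is a Lie-algebra morphism and fields tangent to a foliation are stable under brackets; and $V$ is finite-dimensional, because the leaves are local graphs over $M^{n}$, so a tangent prolongation is determined by its projection to $M^{n}$, whence the evaluation map sending $Y\in V$ to the value of its prolongation to $M^{n}$ at a generic point $p$ is injective and $\dim V\leq n\dim M$. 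Since every $X_{t}$ lies in $V$, so does $\mathfrak{g}^{X}=Lie(\{X_{t}\}_{t\in\mathbb{R}})$, which is the desired conclusion. Your ``standard argument on diagonal prolongations'' is presumably this one, but the two properties of $V$ --- closure under brackets and injectivity of the evaluation map --- are the entire content of the converse, and the implication as you wrote it bypasses both.
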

\begin{dfn}
A  manifold M endowed with bivector structure $P \in\Gamma(\large{\bigwedge}^2 TM)$
satisfying that
$[P, P]=0$
is called a Poisson manifold
$(M,P),$
where  
 [., .]
is the Schouten-Nijenhius bracket
 \footnote{$[X_1\wedge...\wedge X_p, Y_1\wedge...\wedge Y_q]=
\sum\limits_{{\rm{i = 1}}}^{\rm{p}} \sum\limits_{{\rm{j = 1}}}^{\rm{q}}(-1)^{i+j}[X_i,X_j] \wedge X_1\wedge...\wedge\hat{X_i}\wedge...\wedge X_p\wedge Y_1\wedge...\wedge\hat{Y_j}\wedge...\wedge Y_q$}
 and
$\Gamma(\bigwedge^2 TM)$
is the space of sections of 
$\bigwedge^2 TM$\textcolor{red}{ [\ref{11}]}.
\end{dfn}

The  bivector $P$ has a bundle morphism
$P^\#:TM^{*}\longrightarrow TM$
is defined by
$\beta (P^{\#}\alpha)=P(\alpha, \beta)\quad  \forall  \alpha, \beta \in TM^{*}.$\\
\begin{dfn}
A  vector field X on M  with the
bivector  $P $
is said to be  a Hamiltonian vector field 
if
 it can be written as
 $X={P}^\# (df)$
where f is a  function on M, called the Hamiltonian;
conversely, every function f is called the Hamiltonian  function of a unique
Hamiltonian vector field $X_f$.
\end{dfn}

\begin{dfn} 
A Lie-Hamiltonian system  on a Poisson manifold $M$ is a Lie system X whose 
$\mathfrak{g}^X$
 consists of
Hamiltonian vector fields relative to a Poisson bivector 
$P$
\textcolor{red}{[\ref{ref7}]}. 
\end{dfn}

%%%%%%%%%%%%%%%%%%%%%%%

\subsection{  Jacobi and  Jacobi-Lie  structures  }
The study of  the Jacobi manifolds was made  by Lichnerowicz and Kirillov \textcolor{red}{[\ref{ref3}, \ref{ref4}]}.
 
\begin{dfn}\label{df1}
A Jacobi manifold is a triple $(M, \Lambda, E)$, where $\Lambda$ is a $2$-vector field  and $E$
is a vector field on $M$  (called the Reeb vector
field)  such that
\begin{equation}
[\Lambda, \Lambda]=2E\wedge \Lambda,\qquad L_{E}\Lambda=[E, \Lambda]=0,
\end{equation}
where 
the bracket is that of Schouten-Nijenhuis  bracket.
\end{dfn}

 The space $ ( C^{\infty}(M), \lbrace .,.\rbrace_{ \Lambda,E})$  is
a local Lie algebra in the sense of Kirillov \textcolor{red}{[\ref{ref3}]} with the Jacobi bracket
\begin{equation}
\lbrace f,g\rbrace_{ \Lambda,E}= \Lambda(df,dg)+f E g-g E f\qquad \forall f,g \in C^{\infty}(M).
\end{equation}
This Lie bracket is a Poisson bracket
if and only if the vector field
$E$ identically vanishes. 

Iglesias and Marrero have proved in \textcolor{red}{[\ref{ref1}]}  that if Lie group $\mathbb{G}  $ is a connected simply connected with Lie algebra $\mathfrak{g}$ and
the pair$((\mathfrak{g}, \phi_0), (\mathfrak{g}^*, X_0))$ is a Jacobi-Lie bialgebra then $\mathbb{G}  $ 
is  Jacobi-Lie group and
has a special Jacobi
structure.  

%%%%%%%%%%%%%%%%%%%%%%% Example
\begin{dfn}\label{df1}
A Jacobi-Lie bialgebra is a pair $((\mathfrak{g}, \phi_0), (\mathfrak{g}^*, X_0)),$ where $\mathfrak{g}$ is a finite dimensional real Lie algebra   with Lie bracket $[ , ]^{\mathfrak{g}}$ and  $\mathfrak{g}^*$ is dual space  of $\mathfrak{g}$    with Lie bracket $[ , ]^{\mathfrak{g}^*}$; also, $\phi_0\in\mathfrak{g}^*$ and 
  $X_0 \in \mathfrak{g}$ are 1-cocycles on $\mathfrak{g}$ and $\mathfrak{g}^*$, respectively, such that
  for all $  X, Y \in \mathfrak{g}$
  satisfying the following properties:
\begin{equation}
d_{*X_0}[X,Y]^{ \mathfrak{g}}=[X, d_{*X_0}Y]_{\phi_0}^\mathfrak{g}-[Y, d_{*X_0}X]_{\phi_0}^\mathfrak{g},
\end{equation} 
\begin{equation}
\phi_0(X_0)=0,\qquad  \qquad \quad\qquad  \qquad \quad
\end{equation}
 \begin{equation}
i_{\phi_0}(d_{*}X)+[X_0, X]^\mathfrak{g}=0,\qquad  \qquad \quad
\end{equation}
   where
   \begin{equation}
i_{\phi_0}:\wedge^{\, k}\mathfrak{g}\longrightarrow \wedge^{\, k-1}\mathfrak{g},
\qquad  P\mapsto i_{\phi_0}P;
\end{equation}
 moreover,
 $d_{*}$ is the Chevalley-Eilenberg differential 
  of $\mathfrak{g}^{*}$ and $d_{*X_0}$ is the $X_0$-differential of
$\mathfrak{g}^*$
 as well as  the operation $ [ , ]_{\phi_0}^{\mathfrak{g}} $ is  the $\phi_{0}$-Schouten-Nijenhuis bracket 
\textcolor{red}{ [\ref{ref1}]}.  
\end{dfn}
%%%%%%%%%%%%%%%%%%%%%%%%%%%%%%%%%%%%%%%%%%%%%%%%%
\begin{thm}\label{th1}
Let the pair $((\mathfrak{g}, \phi_0), (\mathfrak{g}^*, X_0))$ be a Jacobi-Lie bialgebra and  Lie group
$\mathbb{G}  $ be a connected
simply connected  admitting Lie algebra $\mathfrak{g}$. Then, there exists a unique multiplicative
function $\sigma : \mathbb{G}   \longrightarrow \mathbb{R}$ and a unique $\sigma$-multiplicative  bivector
$ \Lambda$ on $\mathbb{G}  $ satisfying $ (d\sigma)(e) = \phi_0$
and the intrinsic derivative of  bivector at $e$ is $-d_{*X_0}$, i.e.,
$d_{e}\Lambda=-d_{*X_0}.$
 Furthermore, the following identity holds
\begin{equation}
{\Lambda}^\#(d\sigma) =\tilde{X}_0- e^{-\sigma}\bar{X }_0, 
\end{equation}
where $E = - \tilde{X }_0$
and the pair $ (\Lambda, E)$ is a Jacobi structure on $\mathbb{G}  $;
in addition,
 both
$\tilde{X}$ and $ \bar{X }$  are the  right and  left invariant vector field such that
$\tilde{X}_0(e)= \bar{X }_0(e)=X _0$
\textcolor{red}{ [\ref{ref1}]}.
\end{thm}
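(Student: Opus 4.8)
The plan is to integrate the infinitesimal data of the Jacobi-Lie bialgebra to the group level in two stages — first producing $\sigma$, then $\Lambda$ — and finally to verify the two defining identities of a Jacobi structure by reducing everything to the three cocycle conditions of the preceding definition.

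First I would integrate $\phi_0$. Since $\phi_0 \in \mathfrak{g}^*$ is a 1-cocycle, it annihilates $[\mathfrak{g}, \mathfrak{g}]$, so the left-invariant 1-form $\theta$ on $\mathbb{G}$ with $\theta(e) = \phi_0$ is closed. As $\mathbb{G}$ is connected and simply connected, $\theta = d\sigma$ for a unique $\sigma$ with $\sigma(e) = 0$; left-invariance of $\theta$ together with $\theta(e) = \phi_0$ forces $\sigma(gh) = \sigma(g) + \sigma(h)$, so $\sigma$ is multiplicative and by construction $(d\sigma)(e) = \phi_0$. Uniqueness is immediate, since a homomorphism into $(\mathbb{R}, +)$ on a connected group is determined by its differential at $e$.

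Second, I would construct $\Lambda$ by a Drinfeld-type scheme. After trivializing by right translations and absorbing the conformal factor $e^{-\sigma}$, a $\sigma$-multiplicative bivector $\Lambda$ is equivalent to a map $\Pi: \mathbb{G} \to \wedge^2 \mathfrak{g}$ satisfying a twisted group 1-cocycle identity. The intrinsic-derivative condition $d_e \Lambda = -d_{*X_0}$ prescribes the differential of $\Pi$ at $e$, namely the linear map $X \mapsto -(d_* X + X_0 \wedge X)$, and the first bialgebra axiom $d_{*X_0}[X,Y]^{\mathfrak{g}} = [X, d_{*X_0}Y]_{\phi_0}^{\mathfrak{g}} - [Y, d_{*X_0}X]_{\phi_0}^{\mathfrak{g}}$ is exactly the statement that this map is a 1-cocycle for the relevant $\mathfrak{g}$-module structure on $\wedge^2 \mathfrak{g}$. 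Because $\mathbb{G}$ is simply connected, this Lie-algebra cocycle integrates to a unique group cocycle $\Pi$, hence to a unique $\sigma$-multiplicative $\Lambda$ with the prescribed intrinsic derivative. The route I would actually prefer is to Poissonize: the Jacobi-Lie bialgebra corresponds to an ordinary Lie bialgebra on $\mathfrak{g} \oplus \mathbb{R}$, Drinfeld's theorem then supplies a unique multiplicative Poisson bivector on the simply connected group $\mathbb{G} \times \mathbb{R}$, and decomposing this bivector along the $\mathbb{R}$-factor returns the pair $(\sigma, \Lambda)$. The identity $\Lambda^\#(d\sigma) = \tilde{X}_0 - e^{-\sigma}\bar{X}_0$ I would then obtain by evaluating both sides through $\sigma$-multiplicativity: both vanish at $e$, and differentiating once at $e$ reduces the claim to $i_{\phi_0}(d_* X) + [X_0, X]^{\mathfrak{g}} = 0$ (the third axiom), after which the $\sigma$-multiplicativity of $\Lambda$ propagates the equality to every $g \in \mathbb{G}$, with $\phi_0(X_0) = 0$ (the second axiom) ensuring the two invariant vector fields match consistently.

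Finally, and this is where the main work lies, I would check that $(\Lambda, E)$ with $E = -\tilde{X}_0$ is a Jacobi structure. Consider the trivector $T := [\Lambda, \Lambda] - 2E \wedge \Lambda$ and the bivector $S := [E, \Lambda]$. The point is that both are $\sigma$-multiplicative — the Schouten-Nijenhuis bracket of $\sigma$-multiplicative tensors is again $\sigma$-multiplicative, and $E = -\tilde{X}_0$ is right-invariant — so each is completely determined by its intrinsic derivative at $e$. Computing these intrinsic derivatives reduces $T$ and $S$ to algebraic expressions in $d_{*X_0}$, $[\cdot,\cdot]_{\phi_0}^{\mathfrak{g}}$, $\phi_0$ and $X_0$, and the three bialgebra axioms are precisely the conditions forcing these expressions to vanish; hence $T$ and $S$ vanish to first order at $e$, and by multiplicativity they vanish identically, giving $[\Lambda, \Lambda] = 2E \wedge \Lambda$ and $[E, \Lambda] = 0$. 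I expect the obstacle to be exactly this last step: setting up the correspondence between $\sigma$-multiplicative multivector fields and their intrinsic derivatives carefully enough that ``vanishing of the intrinsic derivative implies vanishing of the tensor'' is legitimate, and matching the compatibility axiom of the bialgebra with the integrability $[\Lambda,\Lambda] = 2E\wedge\Lambda$ of the Jacobi bracket.
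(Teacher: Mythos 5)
You should first be aware that the paper does not prove Theorem \ref{th1} at all: the statement is imported verbatim, with attribution, from Iglesias and Marrero [\ref{ref1}], so there is no ``paper's own proof'' to compare against; your attempt can only be measured against that reference. Measured that way, your \emph{first} construction is essentially the argument of [\ref{ref1}], and its first two stages are sound: integrating $\phi_0$ is correct and complete ($\phi_0$ kills $[\mathfrak{g},\mathfrak{g}]$, the left-invariant extension is closed, simple connectedness gives $\sigma$, multiplicativity and uniqueness follow); and your identification of $\sigma$-multiplicative bivectors with group $1$-cocycles $\Pi:\mathbb{G}\to\wedge^{2}\mathfrak{g}$ for an $e^{\pm\sigma}$-twisted adjoint representation, together with the observation that the first bialgebra axiom is exactly the infinitesimal cocycle condition for $-d_{*X_0}$ (the $\phi_0$-Schouten bracket $[X,\cdot]^{\mathfrak{g}}_{\phi_0}$ differs from $[X,\cdot]$ by a term proportional to $\phi_0(X)$, which is precisely the infinitesimal counterpart of the twist), is the heart of the existence proof in [\ref{ref1}]. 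The gap is the one you yourself flag: the verification that $(\Lambda,E)$ is a Jacobi structure is not carried out, and the key lemma you propose for it --- ``the Schouten--Nijenhuis bracket of $\sigma$-multiplicative tensors is again $\sigma$-multiplicative'' --- is false as stated. For a $\sigma$-multiplicative $\Lambda$ the bracket $[\Lambda,\Lambda]$ acquires a doubled twist $e^{-2\sigma}$ plus correction terms involving $d\sigma$, and those corrections are exactly what generate the right-hand side $2E\wedge\Lambda$; pinning down the correct multiplicativity property of the defect $[\Lambda,\Lambda]-2E\wedge\Lambda$ is the real work and occupies most of [\ref{ref1}]. (Two side remarks: $[E,\Lambda]=0$ is easier than you suggest --- since the flow of $E=-\tilde{X}_0$ consists of left translations, one line using $\sigma$-multiplicativity, $d_e\Lambda=-d_{*X_0}$, $d_*X_0=0$ and $\phi_0(X_0)=0$ gives it; and the vanishing of the intrinsic derivative of the defect trivector needs $d_{*X_0}\circ d_{*X_0}=0$, i.e.\ the Jacobi identity of $[\cdot,\cdot]^{\mathfrak{g}^*}$ combined with the cocycle condition on $X_0$, not only the three displayed axioms.)

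Your \emph{preferred} Poissonization route, by contrast, contains a genuine error and should be abandoned. A Jacobi-Lie bialgebra does not correspond to an ordinary Lie bialgebra on $\mathfrak{g}\oplus\mathbb{R}$: in the Iglesias--Marrero framework the associated object is a Lie \emph{bialgebroid} over $\mathbb{R}$ (time-dependent sections of $\mathfrak{g}$, with anchor $X\mapsto\phi_0(X)\,\partial_t$), to which Drinfeld's theorem does not apply; one would need integration results for bialgebroids instead. Moreover, the final step could not work even if some multiplicative Poisson bivector were produced on a group with underlying manifold $\mathbb{G}\times\mathbb{R}$: every multiplicative bivector vanishes at the group identity, whereas the Poissonization $P=e^{-t}\left(\Lambda+\partial_t\wedge E\right)$ of the Jacobi structure you are trying to construct satisfies $P(e,0)=-\,\partial_t\wedge X_0\neq 0$ whenever $X_0\neq 0$ (which is the case in all of this paper's examples). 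Hence no group structure on $\mathbb{G}\times\mathbb{R}$ with identity $(e,0)$ makes the Poissonization multiplicative, and ``apply Drinfeld, then decompose along the $\mathbb{R}$-factor'' cannot return $(\sigma,\Lambda,E)$. Stick with your first construction and supply the missing multiplicativity lemmas.
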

%%%%%%%%%%%%%%%%%%%%%

\begin{dfn}\label{}
A Coboundary Jacobi-Lie bialgebra is a Jacobi-Lie bialgebra 
such that $d_{*X_0}$ is a 1-coboundary that is, there exists $r \in \wedge^{2}\mathfrak{g}$  satisfying that
  \begin{equation}
  d_{*X_0}X=ad_{(\phi_0,1)}(X)(r) \quad \textit{for all}  \quad X\in \mathfrak{g}
  \end{equation}
  (for more details, see \textcolor{red}{[\ref{ref1}]}).
\end{dfn}
A Jacobi structure on $\mathbb{G}  $  was determined by Iglesias and Marrero in \textcolor{red}{[\ref{ref1}]}, as follows:
\begin{equation}\label{IG}
\Lambda=\tilde{r}-e^{-\sigma}\bar{r}, \qquad    E=-\tilde{X}_0,
\end{equation}
Where both
$\tilde{r}$ and $\bar{r}$ are
  right and  left invariant 2-vector on
the Lie group $\mathbb{G}  $
as well as $\tilde{X}_0$ is right invariant vector field on $\mathbb{G}  $.
 Furthermore, 
$ (d\sigma)(e) = \phi_0$.

 %%%%%%%%%%%%%%%%%%%%%%%%%%%%%%%%%%%
 The relation (\ref{IG})
 can
be expressed in terms of coordinate on M as follows 
    \textcolor{red}{ [\ref{ref2}]}:
\begin{equation}\label{N}
\Lambda=\dfrac{1}{2}r^{ij}(X_i^{R\mu}X_j^{R\nu}-e^{-\sigma}X_i^{L\mu}X_j^{L\nu})\partial_\mu\wedge\partial_\nu, 
\end{equation}
and
\begin{equation}\label{E}
E=-\alpha^i X_i^{R\mu}\partial_\mu,\qquad\qquad\qquad\qquad\qquad\qquad
\end{equation}
Where  both $X_{i}^R$
 and $X_{i}^L$ are the $i$th  coordinates of the    right and  left invariant vector fields on
the Lie group $\mathbb{G}  $. Moreover, $r^{ij}$ is calculated from the relation $r = \frac{1}{2}r^{ij}X_i \wedge Xj$
, and  multiplicative
function $\sigma : \mathbb{G}   \longrightarrow \mathbb{R}$  is defined as
$ (d\sigma)(e) = \phi_0$
as well as
$\alpha^{i}$ is obtained   from the relation
$X_0=\alpha^{i}X_{i}$
where 
$\lbrace X_i \rbrace $ is
 the basis of the Lie algebra
 $\mathfrak{g}$.
Now, using
the results in \textcolor{red}{ [\ref{ref2}]}   and relations
 (\ref{N})  and (\ref{E}),  one can calculate  the vector field $E$ and 2-vector $\Lambda$  related to real  two- and three- dimensional   Jacobi-Lie bialgebras. The results are listed  in Tables 1-3.
 The first column gives the names of the real two- and three-dimensional  Jacobi-Lie bialgebras according to \textcolor{red}{[\ref{ref2}]}, and the second column gives the
Jacobi  structure on Lie 
group $\mathbb{G}  $ related to Lie
algebra $\mathfrak{g}$.
Note
that in \textcolor{red}{ [\ref{ref2}]} the bi-r-matrix Jacobi-Lie bialgebras as
Jacobi-Lie bialgebras being coboundary in two directions, that is,
$( (\mathfrak{g},\phi_0),(\mathfrak{g^*}, X_0)) $ and
 $((\mathfrak{g^*}, X_0), (\mathfrak{g},\phi_0) )$ having classical r-matrices r and $ \tilde{r}$   have  been classified.  Nevertheless,  they
classified coboundary Jacobi-Lie bialgebras having r or  $\tilde{r}.$
%%%%%%%%%%%%%%%%%%%%%%%%%%%%%%%%
\subsection{ Jacobi-Lie  Hamiltonian systems  }
 Jacobi-Lie  Hamiltonian systems
 were introduced in \textcolor{red}{[\ref{ref9}]}.
In other words,
 Lie systems possessing a Vessiot-Guldberg Lie algebra of Hamiltonian functions with
respect to a Jacobi structure. 

One of the most useful constructions on  Jacobi manifolds is in a sense analogue of the gradient, defined as follows. 
%%%%%%%%%%%%%%%%%%%%%%%%%%%%%%%%%%%
\begin{dfn}
A  vector field X on the manifold M  with the
Jacobi structure
is said to be  a Hamiltonian 
if there exists a function f,  referred  to as Hamiltonian, such that
\begin{equation}\label{rela1}
X_f=[\Lambda, f]+f E= {\Lambda}^{\#}(d f)+f E,
\end{equation}
where
the bracket is that of Schouten-Nijenhuis  bracket\textcolor{red}{[\ref{ref9}]}.
\end{dfn}

 If $Ef=0$ (that is,
the derivative of the  function $f$ in the direction of the vector field $E$ is equal
to zero), then the  function $f$ is called a good Hamiltonian function and  $X_f$
a good Hamiltonian vector field \textcolor{red}{[\ref{ref9}]}.

 Let $f$ be a   smooth function  on the Jacobi manifold $M$.
There exists a unique vector field $X_f$ on $M$,  referred to as
Hamiltonian vector field associated with $ f,$
such that the following equality is satisfied
\begin{equation}
X_{f}g=\lbrace f,g\rbrace_{ \Lambda,E}\qquad \forall \, g\in C^{\infty}(M).
\end{equation}

However,  a vector field
$X_f$
can possess several  Hamiltonian functions.
Obviously,
$( Ham(M,\Lambda, E), [., .])$ is a Lie algebra,
where 
the bracket is that of vector fields  bracket
and
$Ham (M,\Lambda, E)$ is the space of Hamiltonian vector fields  of Jacobi manifold.

%%%%%%%%%%%%%%%%%%%%%%%%%%%
\begin{dfn}\label{df2}
A Jacobi-Lie system  is a  quadruple $(M, \Lambda, E, X)$ where  $(M, \Lambda, E)$
is a Jacobi manifold and a Lie system $X$  such that  $\mathfrak{g}^X\subset Ham(M, \Lambda, E)$ \textcolor{red}{[\ref{ref9}]}.
\end{dfn}
\begin{dfn}\label{}
A Jacobi-Lie  Hamiltonian system  is a  quadruple $(M, \Lambda, E, f)$ where  $(M, \Lambda, E)$
is a Jacobi manifold
 and $ f:  \mathbb{R}     \times M \longrightarrow M, \quad
 (t, x) \mapsto f_{t} (x)$
  is a
t-dependent function and $Lie (\lbrace  f_{t} \rbrace_{t\in  \mathbb{R} }, \lbrace ., .\rbrace_{\Lambda,E}) $ is finite-dimensional. The vector field X  on M is said to be Jacobi-Lie Hamiltonian system $(M, \Lambda, E, f)$
if $ X_{f_{t}}$  is a Hamiltonian vector field with Hamiltonian function $f_t$ (relative to
Jacobi manifold) 
$\forall t \in \mathbb{R}$ 
\textcolor{red}{[\ref{ref9}]}.
\end{dfn}
\begin{thm}\label{}
If $(M, \Lambda, E, f)$ is a Jacobi-Lie Hamiltonian system, then the system
X of the form
 $X_t = X_{f_t}, \forall  t \in \mathbb{R},$ 
 is a Jacobi-Lie system $(M, \Lambda , E, X).$ If
X is a Lie system whose $\lbrace  X_{t} \rbrace_{t\in \mathbb{R}}$ are good Hamiltonian vector fields, then X possesses
a Jacobi-Lie Hamiltonian \textcolor{red}{ [\ref{ref9}]}.
\end{thm}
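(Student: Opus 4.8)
The plan is to rest the whole argument on one structural fact: the assignment $f \mapsto X_f$ is a Lie algebra morphism $[X_f, X_g] = X_{\{f,g\}_{\Lambda,E}}$ from $(C^\infty(M), \{.,.\}_{\Lambda,E})$ into $(\mathfrak{X}(M), [.,.])$, which is the infinitesimal form of the Jacobi identity of $\{.,.\}_{\Lambda,E}$ once one records the pointwise relation $X_f g = \{f,g\}_{\Lambda,E} + g\,(Ef)$. Before starting, I would extract from $L_E\Lambda = 0$ two facts that carry the whole proof. First, the good Hamiltonian functions form a Lie subalgebra of $(C^\infty(M),\{.,.\}_{\Lambda,E})$: if $Ef = Eg = 0$ then $E\{f,g\}_{\Lambda,E} = (L_E\Lambda)(df,dg) + \Lambda(d(Ef),dg) + \Lambda(df,d(Eg)) = 0$. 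Second, for good $f$ the relation above collapses to $X_f g = \{f,g\}_{\Lambda,E}$, while for any $z$ with $X_z = 0$ it gives $\{z,g\}_{\Lambda,E} = -g\,(Ez)$ for every $g$, so a kernel element of $f \mapsto X_f$ is central exactly when it is good.

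The first assertion is then pure transport of structure. Put $W := Lie(\{f_t\}_{t\in\mathbb{R}}, \{.,.\}_{\Lambda,E})$, which is finite-dimensional by hypothesis; the morphism carries $W$ onto a finite-dimensional Lie subalgebra $X(W) \subset \mathfrak{X}(M)$. Since $X_t = X_{f_t}$ and $f_t \in W$, the family $\{X_t\}_{t\in\mathbb{R}}$ lies in $X(W)$, so $\mathfrak{g}^X = Lie(\{X_t\}_{t\in\mathbb{R}}) \subseteq X(W)$ is finite-dimensional and $X$ is a Lie system. As each $X_t$ is Hamiltonian and $(Ham(M,\Lambda,E),[.,.])$ is a Lie algebra, $\mathfrak{g}^X \subset Ham(M,\Lambda,E)$; hence $(M,\Lambda,E,X)$ is a Jacobi-Lie system in the sense of Definition \ref{df2}.

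For the converse I would reconstruct a finite-dimensional Lie algebra of good Hamiltonian functions lying over $\mathfrak{g}^X$. Fix a basis $Y_1,\dots,Y_r$ of the finite-dimensional $\mathfrak{g}^X$ with $[Y_i,Y_j] = \sum_k c_{ij}^k Y_k$; since the good Hamiltonian vector fields are the image of the good-function Lie subalgebra, they themselves form a Lie subalgebra containing the generators $\{X_t\}$, whence $\mathfrak{g}^X$ consists of good Hamiltonian vector fields and each $Y_i$ admits a good Hamiltonian function $h_i$. The morphism property gives $X_{\{h_i,h_j\}_{\Lambda,E}} = [Y_i,Y_j] = X_{\sum_k c_{ij}^k h_k}$, so $z_{ij} := \{h_i,h_j\}_{\Lambda,E} - \sum_k c_{ij}^k h_k$ satisfies $X_{z_{ij}} = 0$. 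Writing $X_t = \sum_i b_i(t) Y_i$, I would set $f_t := \sum_i b_i(t) h_i$, so that $X_{f_t} = X_t$ and $f_t$ is good, and then claim that $V := \mathrm{span}\{h_1,\dots,h_r\} + \mathrm{span}\{z_{ij}\}$ is a finite-dimensional Lie algebra under $\{.,.\}_{\Lambda,E}$ containing every $f_t$; this forces $Lie(\{f_t\}) \subseteq V$ and exhibits $(M,\Lambda,E,f)$ as the desired Jacobi-Lie Hamiltonian.

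The crux, and the only place where goodness is unavoidable, is the closedness of $V$. I expect to have to prove that each $z_{ij}$ is central. From $X_{z_{ij}} = 0$ one only gets $\{z_{ij},g\}_{\Lambda,E} = -g\,(Ez_{ij})$, which vanishes for all $g$ precisely when $z_{ij}$ is good; and $z_{ij}$ is good because it is built from the good functions $h_i$ and their bracket $\{h_i,h_j\}_{\Lambda,E}$, using the subalgebra property. Granting centrality, $\{h_i,z_{jk}\}_{\Lambda,E} = 0$, $\{z_{ij},z_{kl}\}_{\Lambda,E} = 0$, and $\{h_i,h_j\}_{\Lambda,E} = \sum_k c_{ij}^k h_k + z_{ij} \in V$, so $V$ closes, with $\dim V \le r + \binom{r}{2}$. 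It is worth flagging the contrast with the Poisson (Lie-Hamilton) situation: there constants are Casimirs and the $z_{ij}$ can be absorbed into a central extension by constants, whereas here constants satisfy $X_1 = E$ and lie outside the kernel, so the centrality of the $z_{ij}$ must instead be supplied by goodness---which is exactly why the hypothesis that the $\{X_t\}$ be good Hamiltonian vector fields cannot be dropped.
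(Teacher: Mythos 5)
The paper itself contains no proof of this theorem: it is quoted verbatim from reference [\ref{ref9}], and the only related ingredient recorded in the paper is the morphism property $\varphi\{f,g\}_{\Lambda,E}=[X_f,X_g]$ of Lemma \ref{lm1}. Measured against that, your argument is correct and is essentially the standard one (the argument of the cited source): the direct part is exactly the transport-of-structure argument through the morphism, and your verification of each step (the subalgebra property of good functions from $L_E\Lambda=0$, the identity $X_fg=\{f,g\}_{\Lambda,E}+g\,(Ef)$, the centrality criterion for kernel elements) is sound. For the converse, your reconstruction --- every element of $\mathfrak{g}^X$ is a good Hamiltonian vector field because the good Hamiltonian vector fields form a Lie subalgebra; the defects $z_{ij}=\{h_i,h_j\}_{\Lambda,E}-\sum_k c_{ij}^k h_k$ satisfy $X_{z_{ij}}=0$ and are good, hence central; therefore $\mathrm{span}\{h_1,\dots,h_r\}+\mathrm{span}\{z_{ij}\}$ is a finite-dimensional Lie algebra containing every $f_t$ --- is complete, and your closing remark is exactly the right diagnosis: in the Jacobi setting constants are not Casimirs (indeed $X_1=E$), so goodness is what supplies the centrality that the Poisson/Lie--Hamilton argument gets for free, which is why the hypothesis on good Hamiltonian vector fields cannot be dropped.
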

%%%%%%%%%%%%%%%%%%%%%%%%%%%%%
%%%%%%%%%%%%%%%%%%%%%%%%%
\section{ Jacobi-Lie Hamiltonian systems on
 real low-dimensional 
Jacobi-Lie  groups   }
Now, we consider some Jacobi structures obtained by using the  real two and three dimensional
Jacobi-Lie groups
related to
Jacobi-Lie 
 bialgebras (see table 1, 3 below). 
  In these examples, we consider the Lie group $\mathbb{G}  $ related to the Jacobi-Lie bialgebra
 $( (\mathfrak{g},\phi_0),(\mathfrak{g^*}, X_0))$.
For
this propose we use the formalisms mentioned in the previous section for calculation of 
 Jacobi-Lie  Hamiltonian systems
on  real low dimensional Jacobi-Lie   groups. \\

%%%%%%%%%%%%%%%%%%%%%%%%
{\bf Example 1.}
Real two-dimensional bi-r-matrix Jacobi-Lie bialgebra
$(( A_2,b\tilde{X}^2),(A_2.i, -bX_1)) $\textcolor{red}{[\ref{ref2}]}:\\
Consider
  the  Lie group 
$\mathbb{A}_2$  (with the coordinates $x, y$) related to Lie algebra
$A_2$.
For this example, the
Jacobi structure and Reeb vector have the following forms (see table 1):
\begin{equation}\label{A}
\Lambda_{\mathbb{A}_2}=(1-e^{-(b+1)y})\partial _x\wedge \partial _y,\qquad\qquad 
E_{\mathbb{A}_2}=b \partial _x.
\end{equation}
 Simple calculations show that they satisfying in
\begin{equation*}
[\Lambda_{\mathbb{A}_2}, \;\Lambda_{\mathbb{A}_2}]=0=2E_{\mathbb{A}_2}\wedge \Lambda_{\mathbb{A}_2},\qquad\qquad  [E_{\mathbb{A}_2},\; \Lambda_{\mathbb{A}_2}]=0,
\end{equation*}
where [.,.] is the Schouten-Nijenhuis bracket.
Thus, $( \mathbb{A}_2,  \Lambda_{\mathbb{A}_2}, E_{\mathbb{A}_2} )$ 
is a Jacobi manifold.

It is easy to check
 that 
\begin{equation*}
X_1^H
=
b \,e^{\frac{-x}{b}}\,\partial_x+
\dfrac{(-1+e^{-(1+b)y})e^{\frac{-x}{b}}}{b}\,\partial_y,
\qquad
X_2^H
=
b\,\partial_x,
\end{equation*}
span Lie algebra $A_2\, (i.e.,  [X_1^H, X_2^H]=X_1^H)$ of Hamiltonian vector fields
on $ \mathbb{A}_2.$

 Consider  now the
system on  $ \mathbb{A}_2$
 defined by 
\begin{equation}\label{000}
\frac{d\mathfrak{\alpha}_{2}}{dt}=\sum_{i=1}^2 b_{i}(t) X_i^H(\mathfrak{\alpha}_{2}),\qquad\qquad \forall \mathfrak{\alpha}_{2}
\in  \mathbb{A}_2,
\end{equation}
for arbitrary t-dependent functions $ b_{i}(t).$

 $X^{\mathbb{A}_2}$ is a Lie system
since the associated  t-dependent vector field
$X^{\mathbb{A}_2}=\sum_{i=1}^2 b_{i}(t) X_i^H$ takes values  in the Lie algebra
 $[X_1^H,X_2^H]=X_1^H$,namely,   Lie algebra $A_2$.

We now illustrate that the Lie system (\ref{000})
 is a Jacobi-Lie 
   system.

Indeed, $X_1^H$ and $ X_2^H$  are  Hamiltonian vector fields
with respect to 
 $( \mathbb{A}_2,  \Lambda_{\mathbb{A}_2}, E_{\mathbb{A}_2} )$
  with    Hamiltonian 
 functions     
 $f_1=e^{\frac{-x}{b}}$
and
$f_2=1$, respectively ( i.e.,
$
X_i^H= {\Lambda}^{\#}(d f_i)+f_i E)
$;
 consequently,
 $( \mathbb{A}_2,  \Lambda_{\mathbb{A}_2}, E_{\mathbb{A}_2},X^{\mathbb{A}_2 })$ 
 is a Jacobi-Lie 
  system.

Since
 $f=\sum_{i=1}^2 b_{i}(t) f_i=b_1(t)e^{\frac{-x}{b}} +b_2(t) $ 
is a  Hamiltonian 
 function of
$X^{\mathbb{A}_2}$ for every $t\in \mathbb{R}$ and  the
functions
$f_1$ and $f_2$ 
 satisfy the commutation
relation 
$\lbrace f_1,f_2\rbrace_{\Lambda_{\mathbb{A}_2, E_{\mathbb{A}_2}}}= f_1,$
then
 the functions $\lbrace f_t\rbrace_{t\in \mathbb{R}}$ span a finite-dimensional real  Lie algebra
 of functions
  with respect to the Lie bracket  induced by
 (\ref{A}). Consequently, $X^{\mathbb{A}_2}$
 admits a Jacobi-Lie 
 Hamiltonian  
 system
  $( \mathbb{A}_2,  \Lambda_{\mathbb{A}_2}, E_{\mathbb{A}_2},f ).$ \\

{\bf Example 2.}
Real three-dimensional bi-r-matrix Jacobi-Lie bialgebra
$(( \rom{3},-b\tilde{X}^2+b\tilde{X}^3),(\rom{3}.iv,b X_1)) $ \textcolor{red}{[\ref{ref2}]}:

Consider
  the  Lie group 
$\mathbb{\rom{3}}$  (with the coordinates $x, y, z$) related to Lie algebra
$\rom{3}$.
For this example, the
Jacobi structure and Reeb vector have the following forms (see table 2):
\begin{equation}\label{sestem1}\
\Lambda_{\mathbb{\rom{3}}}=\frac{1}{2}(1-e^{b(y-z)})\partial _x\wedge \partial _y-\frac{1}{2}(1-e^{b(y-z)})\partial _x\wedge \partial _z+(y+z)e^{b(y-z)}\partial _y\wedge \partial _z,\qquad
E_{\mathbb{\rom{3}}}=-b\partial _x.
\end{equation}
 Obviously,
 $
[\Lambda_{\mathbb{\rom{3}}},\; \Lambda_{\mathbb{\rom{3}}}]=
-2 b (y+z)e^{b(y-z)}\partial _x\wedge \partial _y\wedge\partial _z
=2E_{\mathbb{\rom{3}}}\wedge \Lambda_{\mathbb{\rom{3}}}$
\quad and 
\quad $ [E_{\mathbb{\rom{3}}}  ,\; \Lambda_{\mathbb{\rom{3}}}]=0.$

As a result,
$( \mathbb{\rom{3}},  \Lambda_{\mathbb{\rom{3}}}, E_{\mathbb{\rom{3}}} )$ 
is a Jacobi manifold.
It is straightforward to verify
that
 the   Lie algebra, $\rom 2,$ of Hamiltonian vector fields
on $ \mathbb{\rom 3}$ is
spanned by
\begin{equation*}
X_1^H=-b\,\partial_x,\quad 
X_2^H
=
\Big(-\frac{1}{2}+\frac{1}{2}e^{b(y-z)}-by\Big)\partial_x
+
(y+z)e^{b(y-z)}\,\partial_z,
\end{equation*}
$$
X_3^H=
Ei(1,-b(y+z))e^{-b(y+z)} b \,\partial_x
-\partial_y
+\left( 2\,b \left( y+z \right) {\it Ei} \left( 1,-b \left( y+z
 \right)  \right) {e}^{-b
 \left( y+z \right)} +1\right)  
\partial_z
$$
 where
 ${\it Ei} \left( 1,-b \left( y+z \right)  \right)=\int _{-b \left( y+z \right) }^{\infty }\!{\frac {{{\rm e}^{-x}}}{x}}{
dx}$

The system on $ \mathbb{\rom 3}$ can be defined as
\begin{equation}
\frac{d\mathbb{\gamma}}{dt}=\sum_{i=1}^3 b_{i}(t) X_i^H(\mathbb{\gamma}),\qquad\qquad \forall \mathbb{\gamma}\in  \mathbb{\rom 3},
\end{equation}
for arbitrary t-dependent functions $ b_{i}(t).$

So that the associated  t-dependent vector field
$X^{\mathbb{\rom3}}=\sum_{i=1}^3 b_{i}(t) X_i^H$
 takes values  in the Lie algebra $\rom{2}$ 
,that  is,\;$[X_2^H, X_3^H]=X_1^H$
 ; so  $X^{\mathbb{\rom3}}$ is a Lie system.
 
 We now manifest that
 $( \mathbb{\rom3},  \Lambda_{\mathbb{\rom3}}, E_{\mathbb{\rom3}}, X^{\mathbb{\rom3}})$ 
 is a Jacobi-Lie  system. Infact, $X_1^H$ and $ X_2^H$,and $X_3^H$   are  Hamiltonian vector fields
relative to 
$( \mathbb{\rom3},  \;\Lambda_{\mathbb{\rom3}}, \;E_{\mathbb{\rom3}} )$
  with \textcolor{red}{good}  Hamiltonian 
 functions     
 $f_1=1$
and
$f_2=y$, and $f_3=-e^{-2by}Ei(1,\;-b(y+z)$ respectively  ( i.e.,
$
X_i^H= {\Lambda}^{\#}(d f_i)+f_i E )
$;
 subsequently,
  $( \mathbb{\rom3},\;  \Lambda_{\mathbb{\rom3}},\; E_{\mathbb{\rom3}},X^{\mathbb{\rom3} })$ 
 is a Jacobi-Lie  system.
Because
 $f=\sum_{i=1}^2 b_{i}(t) f_i=b_1(t) +b_2(t) y-b_3(t)e^{-2by}Ei(1,-b(y+z)$ 
is a  Hamiltonian 
 function of
$X^{\mathbb{\rom3}}$ for every $t\in \mathbb{R} $, and the
functions
$f_1,f_2$ 
and
$f_3$ satisfy the commutation
relations 
$\lbrace f_2,f_3\rbrace_{\Lambda_{\mathbb{\rom3}, \, E_{\mathbb{\rom3}}} }= f_1$,
then
 the functions $\lbrace f_t\rbrace_{t\in \mathbb{R}}$ span a finite-dimensional real  Lie algebra
 of functions
  with respect to the Lie bracket  induced by
  (\ref{sestem1}). Consequently, $X^{\mathbb{\rom3}}$
 admits a Jacobi-Lie  Hamiltonian  system
  $( \mathbb{\rom3},  \Lambda_{\mathbb{\rom3}}, E_{\mathbb{\rom3}},f).$ \\
  
 %%%%%%%%%%%%%%%%%%%
{\bf Example 3.}
Real three-dimensional bi-r-matrix Jacobi-Lie bialgebra
$(( \rom{3},0),(\rom{3}.v,\frac{1}{2}X_2-\frac{1}{2}X_3)) $  \textcolor{red}{[\ref{ref2}]}:

Consider
  the  Lie group 
$\mathbb{\rom{3}}$  (with the coordinates $x, y, z$) related to Lie algebra
$\rom{3}$.
For this example, the
Jacobi structure and Reeb vector have the following forms (see table 2):

\begin{equation}\label{33333}
\Lambda_ \mathbb{\rom{3}}=\dfrac{e^{2x}-1}{4}\partial _x\wedge \partial _y+\dfrac{e^{2x}-1}{4}\partial _x\wedge \partial _z-\frac{1}{2}(y+z+1-e^{2x})\partial _y\wedge \partial _z,\qquad
E_ \mathbb{\rom{3}}=-\frac{1}{2}\partial _y+\frac{1}{2}\partial _z,
\end{equation}
one can show that
 $$
[\Lambda_{\mathbb{\rom{3}}},\; \Lambda_{\mathbb{\rom{3}}}]=\dfrac{e^{2x}-1}{2}\partial _x\wedge \partial _y \wedge \partial _z =2E_{\mathbb{\rom{3}}}\wedge \Lambda_{\mathbb{\rom{3}}},\quad  \quad  [E_{\mathbb{\rom{3}}} ,\;\Lambda_{\mathbb{\rom{3}}}]=0.$$
So
$( \mathbb{\rom{3}},  \Lambda_{\mathbb{\rom{3}}}, E_{\mathbb{\rom{3}}} )$
is a Jacobi manifold.
A  simple calculation shows that 

$
X_1^H
=
-\frac{1}{2}\, \left( 2\,x-3\,y-z-2 \right)  \left( -1+{e}^{2\,x} \right) 
\partial_x
+\Big(
\frac{3}{2}\,{e}^{2\,x}y+\frac{1}{2}\,{e}^{2\,x}z+{e}^{2\,x}+\frac{y}{2}-\frac{z}{2}-1-{e}^{2\,x}x-x
\Big)\partial_y+
\Big(
-\frac{3}{2}\,{e}^{2\,x}y-\frac{1}{2}\,{e}^{2\,x}z-{e}^{2\,x}+\frac{3}{2}\,y+\frac{5}{2}\,z+1+{e}^{2\,
x}x+{y}^{2}+2\,yz+{z}^{2}+x
\Big)\partial_z,
\,
X_2^H
=
\Big(
\frac{1}{2}-\frac{1}{2}e^{2x}\Big)\partial_x
-\frac{1}{2}e^{2x}\,\partial_y
+
\frac{1}{2}\,e^{2x}\,\partial_z
$
and
$
X_3^H
=
-\frac{1}{2}\partial_y
+\frac{1}{2}\partial_z
$
span the   Lie algebra $\rom 3$ of Hamiltonian vector fields
on $ \mathbb{\rom 3}.$

The system on $ \mathbb{\rom 3}$ can be written as
\begin{equation}
\frac{d\mathbb{\gamma}}{dt}=\sum_{i=1}^3 b_{i}(t) X_i^H(\mathbb{\gamma}),\qquad\qquad \forall \mathbb{\gamma}\in  \mathbb{\rom3},
\end{equation}
for arbitrary t-dependent functions $ b_{i}(t).$

Since the associated  t-dependent vector field
$X^{\mathbb{\rom3}}=\sum_{i=1}^3 b_{i}(t) X_i^H$ takes values  in the Lie algebra $\rom{3}$ 
,that  is,\;$[X_1^H, X_2^H]=-(X_2^H+X_3^H),  [X_1^H, X_3^H]=-(X_2^H+X_3^H)$
 , then $X^{\mathbb{\rom3}}$ is a Lie system.
 
 We now prove that
 $( \mathbb{\rom3},  \Lambda_{\mathbb{\rom3}}, E_{\mathbb{\rom3}}, X^{\mathbb{\rom3}})$ 
 is a Jacobi-Lie  system. As a matter of fact, $X_1^H$ and $ X_2^H$,and $X_3^H$   are Hamiltonian vector fields
relative to 
 $( \mathbb{\rom3},  \Lambda_{\mathbb{\rom3}}, E_{\mathbb{\rom3}} )$
  with   Hamiltonian 
 functions     
 $f_1=2\, \left( y+z+2 \right)  \left( -y+x \right)$
and
$f_2=1+y+z$, and $f_3=1$ respectively ( i.e.,
$
X_i^H= {\Lambda}^{\#}(d f_i)+f_i E );
$ subsequently,
  $( \mathbb{\rom3},  \Lambda_{\mathbb{\rom3}}, E_{\mathbb{\rom3}},,X^{\mathbb{\rom3} })$ 
 is a Jacobi-Lie  system.
 
Using the Lie bracket  induced by
 $ \Lambda_{ \mathbb{\rom3}}$  and $E_{ \mathbb{\rom3}}$ of Lie group $ \mathbb{\rom3}$, 
 we can write
$\lbrace f_1,f_2\rbrace_{\Lambda_{ \mathbb{\rom3}, \, E_{ \mathbb{\rom3}} }}=- f_2-f_3,\quad
\lbrace f_1,f_3\rbrace_{\Lambda_{ \mathbb{\rom3}, \, E_{ \mathbb{\rom3}} }}=-f_2- f_3.$
Therefore,
$( \mathbb{\rom3},  \Lambda_{\mathbb{\rom3}}, E_{\mathbb{\rom3}},f=\sum_{i=1}^3 b_{i}(t) f_i )$
for   $X^{\mathbb{\rom3}}$ 
 is a Jacobi-Lie   Hamiltonian system,
 where  $ \Lambda_{ \mathbb{\rom3}}$  and $E_{ \mathbb{\rom3}}$
 are those appearing in (\ref{33333}).\\

{\bf Example 4.}
Real three-dimensional bi-r-matrix Jacobi-Lie bialgebra
$(( \rom{4},-\tilde{X}^1),(\rom{3}.vi,-X_2-X_3)) $  \textcolor{red}{ [\ref{ref2}]}:

Consider
  the  Lie group 
$\mathbb{\rom{4}}$  (with the coordinates $x, y, z$) related to Lie algebra
$\rom{4}$.
For this example, the
Jacobi structure and Reeb vector have the following forms (see table 2):

\begin{equation}\label{88888}
\Lambda_{\mathbb{\rom{4}}}=-xe^{x}\partial _x\wedge \partial _z+e^{x}(z-y-1+e^{x})\partial _y\wedge \partial _z,\quad
E_{\mathbb{\rom{4}}}=e^{x}\partial _y+e^{x}(1-x)\partial _z.
\end{equation}

Then one can show  that 
 $$
[\Lambda_{\mathbb{\rom{4}}}, \;\Lambda_{\mathbb{\rom{4}}}]=2\,x{e}^{2\,x}\partial _x\wedge \partial_y\wedge\partial _z=2E_{\mathbb{\rom{4}}}\wedge \Lambda_{\mathbb{\rom{4}}},\quad  \quad  [E_{\mathbb{\rom{4}}},\; \Lambda_{\mathbb{\rom{4}}}]=0.$$
so,
$( \mathbb{\rom{4}},  \Lambda_{\mathbb{\rom{4}}}, E_{\mathbb{\rom{4}}} )$
is a Jacobi manifold.
A short calculation shows that
$$
X_1^H
=
-\,\partial _x+
\frac {y-1+{e}^{x}}{x}\,\partial _y
+{\frac {2\,{e}^{x}+2\,y-2}{x}}\,
\partial _z,\quad
X_2^H
=
e^{x}\,\partial _y
+e^{x}(1-x)\,\partial _z $$
and
$$
X_3^H
=
-\frac{e^{x}}{x}\,\partial _y
+\frac {{e}^{x} \left( x-2 \right) }{x}\,
\partial _z$$

span the   Lie algebra $,\rom 4,$ of Hamiltonian vector fields
on $ \mathbb{\rom 4}.$
The system on $ \mathbb{\rom 4}$ can be considered as

\begin{equation}
\frac{d\mathbb{\delta}}{dt}=\sum_{i=1}^3 b_{i}(t) X_i^H(\mathbb{\delta}),\qquad\qquad \forall \mathbb{\delta}\in  \mathbb{\rom 4},
\end{equation}
for arbitrary t-dependent functions $ b_{i}(t).$
Since the associated  t-dependent vector field
$X^{\mathbb{\rom4}}=\sum_{i=1}^3 b_{i}(t) X_i^H$ takes values  in the Lie algebra $\rom{4}$ 
,that  is, $[X_1^H, X_2^H]=-(X_2^H-X_3^H),  [X_1^H, X_3^H]=-X_3^H$
 ; then, $X^{\mathbb{\rom4}}$ is a Lie system.
 We now exhibits that
 $( \mathbb{\rom4},  \Lambda_{\mathbb{\rom4}}, E_{\mathbb{\rom4}},  X^{\mathbb{\rom4} })$ 
 is a Jacobi-Lie   system. Actually, $X_1^H$ and $ X_2^H$,and $X_3^H$   are Hamiltonian vector fields
relative to 
 $( \mathbb{\rom4},  \Lambda_{\mathbb{\rom4}}, E_{\mathbb{\rom4}} )$
  with   Hamiltonian 
 functions     
 $f_1={\frac { \left( 2\,y-z \right) {e}^{-x}}{x}}$
and
$f_2=1$, and $f_3=-\frac{1}{x}$ respectively ( i.e.,
$
X_i^H= {\Lambda}^{\#}(d f_i)+f_i E)
$; subsequently
  $( \mathbb{\rom4},  \Lambda_{\mathbb{\rom4}}, E_{\mathbb{\rom4}}, X^{\mathbb{\rom4} })$ 
 is a Jacobi-Lie system.
 
 Using the Lie bracket  induced by
 $ \Lambda_{ \mathbb{\rom4}}$  and $E_{ \mathbb{\rom4}}$ of Lie group $ \mathbb{\rom4}$, the
functions
$f_1,f_2$ 
and
$f_3$ satisfy the commutation
relations 
$\lbrace f_1,f_2\rbrace_{\Lambda_{ \mathbb{\rom4}, \, E_{ \mathbb{\rom4}} }}=- f_2+f_3,\quad
\lbrace f_1,f_3\rbrace_{\Lambda_{ \mathbb{\rom4}, \, E_{ \mathbb{\rom4}} }}=- f_3.$
Therefore,
$( \mathbb{\rom4},  \Lambda_{\mathbb{\rom4}}, E_{\mathbb{\rom3}},f=\sum_{i=1}^3 b_{i}(t)  )$
for   $X^{ \mathbb{\rom4}}$ 
 is a Jacobi-Lie   Hamiltonian system,
  where  $ \Lambda_{ \mathbb{\rom4}}$  and $E_{ \mathbb{\rom4}}$
 are those appearing in (\ref{88888}).
 
 \section{Lie symmetry for  Jacobi-Lie Hamiltonian systems}
 
 We now present some examples of Jacobi-Lie Hamiltonian systems on three-dimensional real Jacobi-Lie groups
whose distribution associated with their systems is of dimension two. Then, there exists a constant of motion
for their systems. We obtain a  time-independent Lie symmetry \textcolor{red}{[\ref{ref7}]} for their systems to illustrate our procedure.

 %In the previous section we studied some examples of Jacobi-Lie Hamiltonian systems on
% real low-dimensional 
%Jacobi-Lie  groups.

% a Hamiltonian vector field $X_h$ such that 
 %$[ X_h, X_{f_{t}}]=0$ .
%\begin{dfn}
 Let $X$ be a 
  t-dependent vector field  on M, the associated distribution of  $X$  is the generalised distribution $\Delta ^X$ on M spanned by the vector fields of $ \mathfrak{g}^ X.$ In other words,
\[\Delta _x^X = \lbrace{Z_x} \;\vert \;Z \in {\mathfrak{g}^X}\rbrace \subset {T_x}M\]
and the associated co-distribution  of  $X$ is the generalised co-distribution 
${(\Delta ^X})^{\perp}$
on M of the form
\[({\Delta _x^X})^{\perp}= \lbrace\nu  \in T_x^*M\;|\;\nu ({Y_x}) = 0,\forall {Y_x} \in \Delta_x^X\rbrace \subset T_x^*M.\]
where ${(\Delta_x ^X})^{\perp}$ is the annihilator of $\Delta _x^X$.
%\end{dfn}

The function
$
 \rho ^X:  M \longrightarrow \mathbb{N}\cup \lbrace 0 \rbrace, \quad x\mapsto dim \,\Delta _x^X
$
 is a lower semicontinuous at $x$ since it cannot decrease  in a neighbourhood of $x$.
 In addition, $ \rho ^X(x)$
 is constant on the
connected components of dense and  an open subset $U^ X$ of $M$  ( cf. \textcolor{red}{[\ref{11}}, p. 19] ),
where $\Delta ^X$ becomes a regular involutive distribution. Also,
$(\Delta _x^X)^\perp$ becomes a
regular co-distribution on each connected component 
since 
$dim   (\Delta _x^X)^\perp= dim M - \rho^X (x)$.
\begin{pro}\label{}
A function $h : U^X \longrightarrow  \mathbb{R}$ is a local t-independent constant of motion for  a t-dependent vector field X if
and only if $dh \in (\Delta _x^X)^\perp \vert_{U^X}$\,\textcolor{red}{ [\ref{ref7}]}. 
\end{pro}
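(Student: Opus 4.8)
The plan is to reduce the statement to a purely algebraic fact about the minimal Lie algebra $\mathfrak{g}^X$, after unpacking both sides. By the definition of the suspension of $X$, a solution $x(t)$ satisfies $\dot x(t) = X(t,x(t)) = X_t(x(t))$; since $h$ is $t$-independent, the chain rule gives $\frac{d}{dt}h(x(t)) = dh_{x(t)}(\dot x(t)) = (X_t h)(x(t))$. Because through every point $x_0 \in U^X$ and every time $t_0$ there passes an integral curve, $h$ is a local $t$-independent constant of motion on $U^X$ if and only if $(X_t h)(x) = 0$ for all $t \in \mathbb{R}$ and all $x \in U^X$. On the other side, since $\Delta_x^X = \{Z_x : Z \in \mathfrak{g}^X\}$, the membership $dh \in (\Delta^X)^\perp|_{U^X}$ unpacks to $dh_x(Z_x) = (Zh)(x) = 0$ for every $Z \in \mathfrak{g}^X$ and every $x \in U^X$. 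Thus the proposition is equivalent to the claim that, on $U^X$, one has $X_t h = 0$ for all $t$ if and only if $Zh = 0$ for all $Z \in \mathfrak{g}^X$.

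The implication from all of $\mathfrak{g}^X$ down to the generators is trivial, since $X_t \in \mathfrak{g}^X$ for every $t$; this yields the ``if'' direction at once. Indeed, if $dh \in (\Delta^X)^\perp|_{U^X}$, then in particular $X_t h = 0$, so $\frac{d}{dt}h(x(t)) = 0$ along every solution, and $h$ is a local $t$-independent constant of motion.

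The converse is the heart of the argument and the only step requiring a genuine idea. I would introduce the set $\mathfrak{h} = \{W \in \mathfrak{X}(U^X) : Wh = 0\}$ of vector fields annihilating $h$ and show it is a real Lie subalgebra: closure under $\mathbb{R}$-linear combinations is immediate, while closure under the bracket follows from the identity $[Y,Z]h = Y(Zh) - Z(Yh)$, which vanishes whenever $Yh = Zh = 0$. If $h$ is a constant of motion then $X_t \in \mathfrak{h}$ for all $t$; since $\mathfrak{g}^X = Lie(\{X_t\}_{t\in\mathbb{R}})$ is by definition the smallest Lie algebra containing the $X_t$, minimality forces $\mathfrak{g}^X \subseteq \mathfrak{h}$. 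Hence $Zh = 0$ for every $Z \in \mathfrak{g}^X$, i.e. $dh \in (\Delta^X)^\perp|_{U^X}$.

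The main obstacle is precisely this passage from the generators $\{X_t\}_{t\in\mathbb{R}}$ to the full minimal Lie algebra $\mathfrak{g}^X$; once one recognises that ``being annihilated by $h$'' is a bracket-closed condition, the minimality of $\mathfrak{g}^X$ finishes the job. The restriction to $U^X$ plays only a supporting role: on its connected components the codistribution $(\Delta^X)^\perp$ is regular of rank $\dim M - \rho^X$, so $dh \in (\Delta^X)^\perp|_{U^X}$ is a clean constant-rank condition and the local constant of motion is well posed, whereas the algebraic core uses nothing beyond the defining property $\Delta_x^X = \{Z_x : Z \in \mathfrak{g}^X\}$ and the existence of integral curves through each point.
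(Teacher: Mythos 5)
Your proof is correct, but note that the paper itself offers no proof of this proposition: it is quoted verbatim from the Lie--Hamilton systems literature (the reference labelled [\ref{ref7}]), so there is nothing in-paper to compare against line by line. Your reduction is sound: constancy of $h$ along solutions is equivalent, via existence of integral curves through every $(t_0,x_0)$, to $X_t h=0$ on $U^X$ for all $t$, while $dh\in(\Delta^X)^\perp\vert_{U^X}$ is equivalent to $Zh=0$ for all $Z\in\mathfrak{g}^X$; and your key step --- observing that $\mathfrak{h}=\{W : Wh=0 \text{ on } U^X\}$ is closed under linear combinations and brackets, so that minimality of $\mathfrak{g}^X=Lie(\{X_t\}_{t\in\mathbb{R}})$ forces $\mathfrak{g}^X\subseteq\mathfrak{h}$ --- is exactly the right mechanism. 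It is worth remarking that this ``bracket-closed subalgebra plus minimality'' argument is a cleaner packaging of the same idea the authors do spell out for the neighbouring proposition on Jacobi--Lie Hamiltonian systems, where they show a constant of motion commutes with all of $Lie(\{f_t\}_{t\in\mathbb{R}},\{.,.\}_{\Lambda,E})$ by an inductive procedure on successive brackets (there via the Jacobi identity for $\{.,.\}_{\Lambda,E}$, here via $[Y,Z]h=Y(Zh)-Z(Yh)$). Your version avoids the explicit induction and makes the role of minimality transparent; the paper's inductive phrasing makes the generating process visible but is logically the same closure argument. One cosmetic caution: since $h$ lives only on $U^X$, you should say explicitly that all vector fields are evaluated (or restricted) on the open set $U^X$, which is harmless because restriction is a Lie algebra homomorphism and $Zh$ vanishes identically on the open set $U^X$, so $Y(Zh)=0$ there as well.
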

\begin{dfn}\label{}
 Let X be a Jacobi-Lie system  with  a Jacobi-Lie Hamiltonian structure $(M, \Lambda, E, f)$.
 Then,
one can define  its symmetry distribution 
as follows:
$(S^X_{ \Lambda,E})_x={\Lambda}^{\#}(d h_i)+h_i E \in T_xM, $
where  $dh_i \in (\Delta _x^X)^\perp \vert_U.$ 
\end{dfn}
Now using the  symmetry distribution, we study the t-independent
Lie symmetries of Jacobi-Lie Hamiltonian systems on real
low dimensional Jacobi-Lie groups.

The following proposition is the same as the proposition 1 in \textcolor{red}{[\ref{ref9}]},  except that Hamiltonian functions are not
necessarily a good.

\begin{pro}\label{}
 Let X be a Jacobi-Lie system   possessing   a Jacobi-Lie Hamiltonian structure $(M, \Lambda, E, f)$.
The 
%$ h: (t, x) \in \mathbb{R}     \times N \longmapsto  h_{t} (x) \in N $
% $h : M \longrightarrow \mathbb{R} $ 
  smooth function h on the Jacobi manifold M
is a constant of motion for X if and only if it  commutes with all
elements of 
$Lie (\lbrace  f_{t} \rbrace_{t\in  \mathbb{R} }, \lbrace ., .\rbrace_{\Lambda,E}) $ relative to $\lbrace ., .\rbrace_{\Lambda,E}$.
\end{pro}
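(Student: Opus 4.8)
The plan is to reduce the statement to the single identity $X_{f_t} h = \{f_t, h\}_{\Lambda,E}$ and then to exploit the Jacobi identity for $\{.,.\}_{\Lambda,E}$, which is available because $(C^\infty(M), \{.,.\}_{\Lambda,E})$ is a local Lie algebra in the sense of Kirillov. First I would recall, from the preceding proposition, that $h$ is a local $t$-independent constant of motion for $X$ precisely when $dh \in (\Delta^X_x)^\perp$, that is, when $Zh = 0$ for every $Z \in \mathfrak{g}^X$. Since here $X_t = X_{f_t}$ and $\mathfrak{g}^X = Lie(\{X_{f_t}\}_{t\in\mathbb{R}})$, and since $[X_{f_s}, X_{f_t}]h = X_{f_s}(X_{f_t}h) - X_{f_t}(X_{f_s}h)$ vanishes as soon as each generator annihilates $h$, this condition is equivalent to the apparently weaker requirement $X_{f_t}h = 0$ for all $t \in \mathbb{R}$, i.e. $\{f_t, h\}_{\Lambda,E} = 0$ for all $t$.

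For the implication $\Leftarrow$, suppose $h$ commutes with every element of $Lie(\{f_t\}_{t\in\mathbb{R}}, \{.,.\}_{\Lambda,E})$. In particular $\{f_t, h\}_{\Lambda,E} = 0$ for each $t$, since every generator $f_t$ lies in this Lie algebra; by the reduction above this is exactly the condition that $h$ be a constant of motion.

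The implication $\Rightarrow$ is the substantive one. Assuming $h$ is a constant of motion, I have $\{f_t, h\}_{\Lambda,E} = 0$ for every $t$, and I must promote this from the generators $\{f_t\}$ to the whole Lie algebra they generate. The key step is to introduce the centraliser $C_h = \{g \in C^\infty(M) : \{g, h\}_{\Lambda,E} = 0\}$ and to verify that it is a Lie subalgebra of $(C^\infty(M), \{.,.\}_{\Lambda,E})$: it is plainly a linear subspace, and whenever $\{g_1, h\}_{\Lambda,E} = \{g_2, h\}_{\Lambda,E} = 0$ the Jacobi identity yields $\{\{g_1, g_2\}_{\Lambda,E}, h\}_{\Lambda,E} = \{g_1, \{g_2, h\}_{\Lambda,E}\}_{\Lambda,E} - \{g_2, \{g_1, h\}_{\Lambda,E}\}_{\Lambda,E} = 0$, so $\{g_1, g_2\}_{\Lambda,E} \in C_h$. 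Since $C_h$ contains every $f_t$ and $Lie(\{f_t\})$ is by definition the smallest Lie subalgebra containing the $f_t$, I conclude $Lie(\{f_t\}) \subseteq C_h$, which says precisely that $h$ commutes with every element of $Lie(\{f_t\})$.

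I expect the centraliser computation to be the only point requiring care, since it is where the Jacobi identity for $\{.,.\}_{\Lambda,E}$ enters; the remaining ingredients are just the dictionary relating $X_{f_t}$ to the bracket with $f_t$ and the minimality built into the definition of $Lie(\cdot)$. The main thing to watch is the bookkeeping of signs in the Jacobi identity, but no genuine obstacle beyond that is anticipated.
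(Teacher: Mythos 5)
Your proposal is correct and follows essentially the same route as the paper: both directions reduce the constant-of-motion condition to $\{f_t,h\}_{\Lambda,E}=X_{f_t}h=0$ and then use the Jacobi identity of $\{.,.\}_{\Lambda,E}$ to propagate commutation from the generators $\{f_t\}_{t\in\mathbb{R}}$ to the whole of $Lie(\{f_t\}_{t\in\mathbb{R}},\{.,.\}_{\Lambda,E})$. Your centraliser $C_h$ being a Lie subalgebra, combined with the minimality in the definition of $Lie(\cdot)$, is just a cleaner packaging of the paper's ``inductive procedure'' over successive brackets and linear combinations, with the added (welcome) care of justifying why it suffices to check $X_{f_t}h=0$ on generators rather than on all of $\mathfrak{g}^X$.
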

\begin{proof}
Let  $h  $   be a t-independent constant of motion for X,  i.e. 
\begin{equation}\label{jaco}
\lbrace  f_t, h \rbrace_{ \Lambda,E}=X_{ f_{t}} h=0, \quad \forall t\in
\mathbb{R}.
\end{equation}
Using (\ref{jaco}) and the Jacobi identity, we get

$\lbrace h, \lbrace  f_t , f_{t^\prime} \rbrace_{ \Lambda,E}\rbrace_{ \Lambda,E}=
\lbrace f_t, \lbrace    h, f_{t^\prime} \rbrace_{ \Lambda,E}\rbrace_{ \Lambda,E}
+\lbrace\lbrace  h , f_t  \rbrace_{ \Lambda,E}, f_{t^\prime}\rbrace_{ \Lambda,E} =0\quad \forall t, t^\prime \in
\mathbb{R}.$

By the inductive procedure, 
 $h$  commutes with the whole  successive Jacobi brackets of  elements of
$\lbrace  f_{t} \rbrace_{t\in  \mathbb{R} }$ relative to $\lbrace ., .\rbrace_{\Lambda,E}$ and their linear combinations.
Since $\lbrace  f_{t} \rbrace_{t\in  \mathbb{R} }$ span $Lie (\lbrace  f_{t} \rbrace_{t\in  \mathbb{R} }, \lbrace ., .\rbrace_{\Lambda,E}) $, we obtain that h 
commutes with $Lie (\lbrace  f_{t} \rbrace_{t\in  \mathbb{R} }, \lbrace ., .\rbrace_{\Lambda,E}) $ relative to $\lbrace ., .\rbrace_{\Lambda,E}$.\\
Let us prove the converse. If $h$  commutes with    $Lie (\lbrace  f_{t} \rbrace_{t\in  \mathbb{R} }, \lbrace ., .\rbrace_{\Lambda,E}) $ relative to $\lbrace ., .\rbrace_{\Lambda,E}$, it commutes with the elements $\lbrace  f_{t} \rbrace_{t\in  \mathbb{R} }$
 relative to $\lbrace ., .\rbrace_{\Lambda,E}$.
 By applying (\ref{jaco}), the function h  is  a constant of motion of X.
\end{proof}\\

\begin{lem}\label{lm1}
The mappinq
$$\varphi:( C^{\infty}(M), \lbrace .,.\rbrace_{ \Lambda,E})\longrightarrow (Ham (M,\Lambda, E), [., .])$$
is a homomorphism Lie algebras, i.e.
$\varphi{\lbrace f, g\rbrace_{ \Lambda,E}}=[X_f, X_g]$.
\end{lem}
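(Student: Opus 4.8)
The plan is to exploit the defining relation $X_h k=\{h,k\}_{\Lambda,E}$ for every $k\in C^{\infty}(M)$, together with the elementary fact that a vector field on $M$ is completely determined by its action as a derivation on smooth functions. Consequently, to prove $\varphi\{f,g\}_{\Lambda,E}=[X_f,X_g]$, i.e. $X_{\{f,g\}_{\Lambda,E}}=[X_f,X_g]$, it suffices to verify that the two vector fields agree when applied to an arbitrary test function $k\in C^{\infty}(M)$. This reduces the identity of vector fields to an identity among scalar functions, where all the structure is carried by the Jacobi bracket itself.

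First I would expand the right-hand side. Since the commutator of vector fields acts on functions by $[X_f,X_g]k=X_f(X_g k)-X_g(X_f k)$, and since $X_g k=\{g,k\}_{\Lambda,E}$ and $X_f k=\{f,k\}_{\Lambda,E}$, applying the defining relation once more yields
\[
[X_f,X_g]k=\{f,\{g,k\}_{\Lambda,E}\}_{\Lambda,E}-\{g,\{f,k\}_{\Lambda,E}\}_{\Lambda,E}.
\]
On the other hand, the left-hand side is, by the same defining relation, simply $X_{\{f,g\}_{\Lambda,E}}k=\{\{f,g\}_{\Lambda,E},k\}_{\Lambda,E}$. The remaining step is purely algebraic: I would invoke the Jacobi identity for the Jacobi bracket, which is at our disposal because $(C^{\infty}(M),\{.,.\}_{\Lambda,E})$ is a local Lie algebra in the sense of Kirillov, as recalled after the definition of the Jacobi manifold. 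Writing the Jacobi identity in cyclic form $\{\{f,g\},k\}+\{\{g,k\},f\}+\{\{k,f\},g\}=0$ and using antisymmetry on the last two terms (so that $\{\{g,k\},f\}=-\{f,\{g,k\}\}$ and $\{\{k,f\},g\}=\{g,\{f,k\}\}$) gives exactly $\{\{f,g\},k\}_{\Lambda,E}=\{f,\{g,k\}_{\Lambda,E}\}_{\Lambda,E}-\{g,\{f,k\}_{\Lambda,E}\}_{\Lambda,E}$, which matches the expansion of $[X_f,X_g]k$ above. Since $k$ is arbitrary, the two vector fields coincide.

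Finally I would note that $\varphi$ is manifestly linear and, by definition, sends each $f$ to the Hamiltonian vector field $X_f=\Lambda^{\#}(df)+fE\in \mathrm{Ham}(M,\Lambda,E)$, so it is a well-defined map between the stated Lie algebras; the computation above then shows it intertwines the two brackets, establishing that it is a Lie-algebra homomorphism. The only genuine obstacle is the careful bookkeeping of signs when converting the cyclic Jacobi identity into the ``Leibniz'' form and making explicit that a vector field is fixed by its derivation action. The analytic heart of the matter—that the Jacobi bracket truly satisfies the Jacobi identity, which in turn rests on the Jacobi-manifold conditions $[\Lambda,\Lambda]=2E\wedge\Lambda$ and $[E,\Lambda]=0$—is already subsumed in the Kirillov local-Lie-algebra statement and need not be re-derived here.
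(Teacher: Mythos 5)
The paper states Lemma \ref{lm1} without any proof at all, so there is no in-paper argument to compare against; judged on its own merits, your proposal has a genuine gap, and it sits in the very first step. Your entire reduction rests on the ``defining relation'' $X_h k=\{h,k\}_{\Lambda,E}$ for all $k\in C^{\infty}(M)$. The paper does display this identity, but it is inconsistent with the paper's own definitions: from $X_f=\Lambda^{\#}(df)+fE$ and $\{f,g\}_{\Lambda,E}=\Lambda(df,dg)+fEg-gEf$ one computes
\begin{equation*}
X_f g=\Lambda(df,dg)+fEg=\{f,g\}_{\Lambda,E}+g\,Ef,
\end{equation*}
so the relation you use holds only when $Ef=0$, i.e.\ for \emph{good} Hamiltonian functions. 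In fact no vector field whatsoever can realize $k\mapsto\{f,k\}_{\Lambda,E}$ when $Ef\neq 0$, because that map fails the Leibniz rule: $\{f,ab\}_{\Lambda,E}-a\{f,b\}_{\Lambda,E}-b\{f,a\}_{\Lambda,E}=ab\,Ef$, so it is not a derivation. This is not a harmless convention issue here: the section explicitly aims at Hamiltonian functions that are \emph{not} necessarily good (see the remark preceding the constants-of-motion proposition), and the examples where Lemma \ref{lm1} is applied via Proposition \ref{pro} involve such functions (e.g.\ in Example 1 of Section 4, $E=-b\partial_x$ and $f_1=x$, so $Ef_1\neq 0$). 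As written, your argument is the standard Poisson-manifold proof; it establishes the lemma only on the sector $Ef=Eg=0$ and bypasses exactly what distinguishes a Jacobi manifold from a Poisson manifold.

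The strategy itself (a vector field is determined by its derivation action on functions, plus bracket identities) can be repaired, but only by working with the corrected relation $X_fk=\{f,k\}_{\Lambda,E}+k\,Ef$. Running your test-function computation with it (subscripts $\Lambda,E$ omitted below), the Jacobi identity still produces the term $\{\{f,g\},k\}$, but now a correction survives, namely $[X_f,X_g]k-X_{\{f,g\}}k=k\,\bigl(\{f,Eg\}-\{g,Ef\}-E\{f,g\}\bigr)$, and its vanishing is \emph{not} a consequence of the Jacobi identity alone: it holds because $[E,\Lambda]=0$ makes $E$ a derivation of the Jacobi bracket, $E\{f,g\}=\{Ef,g\}+\{f,Eg\}$. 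So the missing ingredient is precisely the structure equation $L_E\Lambda=0$, which your closing paragraph asserts need not be invoked beyond Kirillov's local-Lie-algebra statement. Alternatively, one can simply cite the classical result (Lichnerowicz; Iglesias--Marrero) that $f\mapsto X_f$ is a Lie-algebra morphism on any Jacobi manifold, which is presumably what the paper itself implicitly relies on.
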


\begin{pro}\label{pro}
 Let X be a Jacobi-Lie system   admitting a Jacobi-Lie Hamiltonian structure $(M, \Lambda, E, f)$.
 If h is a t-independent constant of motion for X, then $X_h={\Lambda}^{\#}(d h)+h E$  is a t-independent Lie symmetry of
X.
\end{pro}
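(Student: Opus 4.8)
The plan is to verify the single defining property of a time-independent Lie symmetry, namely that $X_h$ commutes with every member of the family $\{X_t\}_{t\in\mathbb{R}}$. Because $X$ carries the Jacobi-Lie Hamiltonian structure $(M,\Lambda,E,f)$, its associated $t$-dependent vector field satisfies $X_t=X_{f_t}$ for all $t\in\mathbb{R}$. A time-independent field $Y$ is a Lie symmetry of $X$ exactly when its suspension commutes with $\tilde X=\partial/\partial t+X$; since $Y=X_h$ has $t$-independent coefficients and no $\partial/\partial t$ component, one has $[X_h,\partial/\partial t]=0$ and the symmetry condition collapses to
$$[X_h,X_{f_t}]=0\qquad\forall t\in\mathbb{R}.$$
Thus the whole argument reduces to establishing this commutator identity.

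To do so I would appeal to Lemma \ref{lm1}, which asserts that $\varphi\colon f\mapsto X_f$ is a homomorphism from $(C^\infty(M),\{.,.\}_{\Lambda,E})$ into $(Ham(M,\Lambda,E),[.,.])$. Applying it to the pair $(h,f_t)$ gives
$$[X_h,X_{f_t}]=\varphi\big(\{h,f_t\}_{\Lambda,E}\big)=X_{\{h,f_t\}_{\Lambda,E}},$$
so the commutator is nothing but the Hamiltonian vector field of the single function $\{h,f_t\}_{\Lambda,E}$. The task is now reduced to showing that this bracket vanishes.

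Finally I would invoke the hypothesis that $h$ is a $t$-independent constant of motion. By definition this means $\{f_t,h\}_{\Lambda,E}=X_{f_t}h=0$ for every $t$, and the skew-symmetry of the Jacobi bracket then yields $\{h,f_t\}_{\Lambda,E}=0$. Substituting into the preceding display and using $X_0=\Lambda^{\#}(d0)+0\cdot E=0$, I conclude that $[X_h,X_{f_t}]=0$ for all $t\in\mathbb{R}$; hence $X_h$ commutes with every $X_t$ and is a time-independent Lie symmetry of $X$.

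I do not expect a genuine computational obstacle: the statement follows directly from Lemma \ref{lm1} together with the definition of a constant of motion. The one step deserving care is the initial reduction — rewriting the Lie-symmetry condition, which is properly formulated through the suspension $\tilde X$, as the plain commutator equation $[X_h,X_{f_t}]=0$; this is legitimate precisely because $X_h$ is time-independent. A secondary point worth noting is that Lemma \ref{lm1} is used without any \emph{good Hamiltonian} assumption on $h$, which is exactly the generalization flagged in the remark preceding the statement.
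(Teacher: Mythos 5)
Your proposal is correct and follows essentially the same route as the paper: both reduce the claim to the commutator identity $[X_h,X_{f_t}]=0$, apply Lemma \ref{lm1} to rewrite the bracket as $\varphi\bigl(\{h,f_t\}_{\Lambda,E}\bigr)$, and then kill it using skew-symmetry together with the constant-of-motion condition $\{f_t,h\}_{\Lambda,E}=X_{f_t}h=0$. Your explicit justification of the preliminary reduction via the suspension $\tilde X$ is a detail the paper leaves implicit, but it does not constitute a different argument.
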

\begin{proof}
 In view of Lemma
   \ref{lm1}, we have\\
$
[ X_h, X_{f_{t}}]=[{\Lambda}^{\#}(d h)+h E,{\Lambda}^{\#}(d f_t)+f_t E]=\varphi\lbrace  h, f_t \rbrace_{ \Lambda,E}=-\varphi\lbrace  f_t, h \rbrace_{ \Lambda,E}=-\varphi(X_t h)=0,  \forall t\in
\mathbb{R}.
$
\end{proof}\\

%%%%%%%%%%%%%%%%%

%%%%%%%%%%%%%%%%
{\bf Example 1.}
Real three-dimensional bi-r-matrix Jacobi-Lie bialgebra
$(( \rom{3},-b\tilde{X}^2+b\tilde{X}^3),(\rom{3}.iv,b X_1)) $ \textcolor{red}{[\ref{ref2}]}:

Let us Consider
  the  Lie group 
$\mathbb{\rom{3}}$  related to Lie algebra
$\rom{3}$ and  the Jacobi structure given by (\ref{sestem1}).
A  simple calculation shows that

$$
X_1^H
=
-b x \partial _x+\frac{1}{2}(1-{e}^{b \left( y-z \right) }
)\partial _y-\frac{1}{2}(1-{e}^{b \left( y-z \right) }
)\partial _z
$$
and 

\qquad\qquad
$$X_2^H
= x ln(x) \partial _x-\frac{1}{2b}(1-{e}^{b \left( y-z \right) }
)(1+ln(x))\partial _y+\frac{1}{2b}(1-{e}^{b \left( y-z \right) }
)(1+ln(x))\partial _z
$$

span the   Lie algebra $,A_2,$ of Hamiltonian vector fields
on $ \mathbb{\rom 3}.$

The system on $ \mathbb{\rom 3}$ can be written as
\begin{equation}
\frac{d\mathbb{\gamma}}{dt}=\sum_{i=1}^2 b_{i}(t) X_i^H(\mathbb{\gamma}),\qquad\qquad \forall\mathbb{\gamma}\in  \mathbb{\rom3},
\end{equation}
for arbitrary t-dependent functions $ b_{i}(t).$

Since the associated  t-dependent vector field
$X^{\mathbb{\rom3}}=\sum_{i=1}^2 b_{i}(t) X_i^H$ takes values  in the Lie algebra $A_2$ 
,that  is,\;$[X_1^H, X_2^H]=X_1^H$
 , then $X^{\mathbb{\rom3}}$ is a Lie system.
 
 We now prove that
 $( \mathbb{\rom3},  \Lambda_{\mathbb{\rom3}}, E_{\mathbb{\rom3}}, X^{\mathbb{\rom3}})$ 
 is a Jacobi-Lie  system. As a matter of fact, $X_1^H$ and $ X_2^H$   are Hamiltonian vector fields
relative to 
 $( \mathbb{\rom3},  \Lambda_{\mathbb{\rom3}}, E_{\mathbb{\rom3}} )$
  with  Hamiltonian 
 functions     
 $f_1=x$
and
$f_2=-{\frac {x\ln  \left( x \right) }{b}}$ respectively. Subsequently,
  $( \mathbb{\rom3},  \Lambda_{\mathbb{\rom3}}, E_{\mathbb{\rom3}},,X^{\mathbb{\rom3} })$ 
 is a Jacobi-Lie  system.
 
Using the Lie bracket  induced by
 $ \Lambda_{ \mathbb{\rom3}}$  and $E_{ \mathbb{\rom3}}$ of Lie group $ \mathbb{\rom3}$, 
 we can write
$\lbrace f_1,f_2\rbrace_{\Lambda_{ \mathbb{\rom3},  E_{ \mathbb{\rom3}} }}= f_1;$
therefore,
$( \mathbb{\rom3},  \Lambda_{\mathbb{\rom3}}, E_{\mathbb{\rom3}},f=\sum_{i=1}^2 b_{i}(t) f_i )$
for   $X^{\mathbb{\rom3}}$ 
 is a Jacobi-Lie   Hamiltonian system.
\\ 
 It is easy  check that
$h=1-{e}^{-b \left( y-z \right) }
$
is a t-independent constant of motion.
One can  show that
$$\lbrace   h\, ,f_\alpha\rbrace_{\Lambda_{\mathbb{\rom3}}, E_{\mathbb{\rom3}}}=0,\qquad \alpha=1,2.$$
Then, the function $h$ always Jacobi commutes with the whole Lie algebra $Lie(\lbrace f_t \rbrace _{ t\in  \mathbb{R}} ,\lbrace.,.\rbrace_{\Lambda_{\mathbb{\rom3}}, E_{\mathbb{\rom3}}}),$ as expected.
 By applying
proposition \ref{pro}, $X_h={\Lambda}^{\#}(dh)+h E$   must be a Lie symmetry for this system. A short calculation shows
that
$X_h=\left( y+z \right) b\partial_y+ \left( y+z \right) b\partial_z.$
It is easy to check that  $X_h$ commutes with $X_1^H, X_2^H,$ and thus  commutes   with
every $X_{f_{t}}, $ with $t \in \mathbb{R},$ i.e. $X_h$ is a Lie symmetry for $X^{\mathbb{\rom3}}$.\\
%%%%%%%%%%%%%%%%%%%%%%%%

%%%%%%%%%%%%%%%%%%%%%
{\bf Example 2.}
Real three-dimensional bi-r-matrix Jacobi-Lie bialgebra
$(( \rom{3},0),(\rom{3}.v,\frac{1}{2}X_2-\frac{1}{2}X_3)) $  \textcolor{red}{[\ref{ref2}]}:

Let us Consider
  the  Lie group 
$\mathbb{\rom{3}}$  related to Lie algebra
$\rom{3}$ and  the Jacobi structure given by (\ref{33333}).
A  simple calculation shows that

$$
X_1^H
=
\dfrac{1-e^{2x}}{4}\partial _x-\frac{y}{2}\partial _y-\frac{1}{2}(1+z-{e}^{2x})\partial _z
$$
and 
\qquad\qquad
$$X_2^H
=\frac{1}{2}\, \left( -1+{e}^{2\,x} \right)  \left( 1+\ln  \left( y \right) 
 \right) \partial _x
+y\ln  \left( y \right)\partial _y
+ (\left( -\ln  \left( y \right) -1 \right) {e}^{2\,x}+ \left( z+1
 \right) \ln  \left( y \right) +y+z+1) \partial _z
$$

span the   Lie algebra $,A_2,$ of Hamiltonian vector fields
on $ \mathbb{\rom 3}.$

The system on $ \mathbb{\rom 3}$ can be written as
\begin{equation}
\frac{d\mathbb{\gamma}}{dt}=\sum_{i=1}^2 b_{i}(t) X_i^H(\mathbb{\gamma}),\qquad\qquad \forall\mathbb{\gamma}\in  \mathbb{\rom3},
\end{equation}
for arbitrary t-dependent functions $ b_{i}(t).$

Since the associated  t-dependent vector field
$X^{\mathbb{\rom3}}=\sum_{i=1}^2 b_{i}(t) X_i^H$ takes values  in the Lie algebra $A_2,$ that  is,\;$[X_1^H, X_2^H]=X_1^H$
 , then $X^{\mathbb{\rom3}}$ is a Lie system.
 
 We now prove that
 $( \mathbb{\rom3},  \Lambda_{\mathbb{\rom3}}, E_{\mathbb{\rom3}}, X^{\mathbb{\rom3}})$ 
 is a Jacobi-Lie  system. As a matter of fact, $X_1^H$ and $ X_2^H$   are Hamiltonian vector fields
relative to 
 $( \mathbb{\rom3},  \Lambda_{\mathbb{\rom3}}, E_{\mathbb{\rom3}} )$
  with  Hamiltonian 
 functions     
 $f_1=y$
and
$f_2=-2 \,y \,ln(y)$ respectively. Subsequently,
  $( \mathbb{\rom3},  \Lambda_{\mathbb{\rom3}}, E_{\mathbb{\rom3}},,X^{\mathbb{\rom3} })$ 
 is a Jacobi-Lie  system.
 
Using the Lie bracket  induced by
 $ \Lambda_{ \mathbb{\rom3}}$  and $E_{ \mathbb{\rom3}}$ of Lie group $ \mathbb{\rom3}$, 
 we can write
$\lbrace f_1,f_2\rbrace_{\Lambda_{ \mathbb{\rom3},  E_{ \mathbb{\rom3}} }}= f_1;$
therefore,
$( \mathbb{\rom3},  \Lambda_{\mathbb{\rom3}}, E_{\mathbb{\rom3}},f=\sum_{i=1}^2 b_{i}(t) f_i )$
for   $X^{\mathbb{\rom3}}$ 
 is a Jacobi-Lie   Hamiltonian system.
 
 It is easy to show that
$h=\ln  \left( {e}^{2\,x}-1 \right) -{e}^{-2\,x}\ln  \left( {e}^{2\,x}-1
 \right) +y+z
$
is a t-independent constant of motion.
One can  check that
$$\lbrace   h\, ,f_\alpha\rbrace_{\Lambda_{\mathbb{\rom3}}, E_{\mathbb{\rom3}}}=0,\qquad \alpha=1,2.$$
Then, the function $h$ always Jacobi commutes with the whole Lie algebra $Lie(\lbrace f_t \rbrace _{ t\in  \mathbb{R}} ,\lbrace.,.\rbrace_{\Lambda_{\mathbb{\rom3}}, E_{\mathbb{\rom3}}}),$ as expected.

 By applying
proposition \ref{pro},
  $X_h={\Lambda}^{\#}(dh)+h E$   must be a Lie symmetry for this system. A short calculation shows
that
$X_h=\frac{1}{2}(1-e^{2 x})\partial_x+ {\frac { \left( {e}^{-2\,x}+{e}^{2\,x}-2 \right) \ln  \left( {e}^{2\,x
}-1 \right) -2\,{e}^{2\,x}+{e}^{4\,x}+1}{{e}^{2\,x}-1}}
\partial_z.$
It is easy to check that  $X_h$ commutes with $X_1^H, X_2^H,$ and hence with
every $X_{f_{t}}, $ with $t \in \mathbb{R},$ i.e. $X_h$ is a Lie symmetry for $X^{\mathbb{\rom3}}$.\\
%%%%%%%%%%%%%%%%%%%%%%%%%

{\bf Example 3.}
Real three-dimensional bi-r-matrix Jacobi-Lie bialgebra
$(( \rom{4},-\tilde{X}^1),(\rom{3}.vi,-X_2-X_3))${[\ref{ref2}]}:

Let us Consider
  the  Lie group 
$\mathbb{\rom{4}}$  related to Lie algebra
$\rom{4}$ and  the Jacobi structure given by (\ref{88888}).
A  simple calculation shows that

$$
X_1^H
 =({{\rm e}^{x-{e}^{-x}y}})\partial _y-{{\rm e}^{-{e}^{-x}y}}(xe^x+xy-y+z-1)\partial _z,
$$
and 
$$X_2^H
= e^x\partial _y+ e^x (x-1)\partial _z
$$

span the   Lie algebra $,A_2,$ of Hamiltonian vector fields
on $ \mathbb{\rom 4}.$

The system on $ \mathbb{\rom 4}$ can be written as
\begin{equation}
\frac{d\mathbb{\delta}}{dt}=\sum_{i=1}^2 b_{i}(t) X_i^H(\mathbb{\delta}),\qquad\qquad \forall\mathbb{\delta}\in  \mathbb{\rom4},
\end{equation}
for arbitrary t-dependent functions $ b_{i}(t).$

Since the associated  t-dependent vector field
$X^{\mathbb{\rom4}}=\sum_{i=1}^2 b_{i}(t) X_i^H$ takes values  in the Lie algebra $A_2$ 
,that  is,\;$[X_1^H, X_2^H]=X_1^H$
 , then $X^{\mathbb{\rom4}}$ is a Lie system.
 
 We now prove that
 $( \mathbb{\rom4},  \Lambda_{\mathbb{\rom4}}, E_{\mathbb{\rom4}}, X^{\mathbb{\rom4}})$ 
 is a Jacobi-Lie  system. As a matter of fact, $X_1^H$ and $ X_2^H$   are Hamiltonian vector fields
relative to 
 $( \mathbb{\rom4},  \Lambda_{\mathbb{\rom4}}, E_{\mathbb{\rom4}} )$
  with  Hamiltonian 
 functions     
 $f_1={{\rm e}^{-{e}^{-x}y}}$
and
$f_2=1 $ respectively. Subsequently,
  $( \mathbb{\rom4},  \Lambda_{\mathbb{\rom4}}, E_{\mathbb{\rom4}},,X^{\mathbb{\rom4} })$ 
 is a Jacobi-Lie  system.
 
Using the Lie bracket  induced by
 $ \Lambda_{ \mathbb{\rom4}}$  and $E_{ \mathbb{\rom4}}$ of Lie group $ \mathbb{\rom4}$, 
 we can write
$\lbrace f_1,f_2\rbrace_{\Lambda_{ \mathbb{\rom4},  E_{ \mathbb{\rom4}} }}= f_1;$
therefore,
$( \mathbb{\rom4},  \Lambda_{\mathbb{\rom4}}, E_{\mathbb{\rom4}},f=\sum_{i=1}^2 b_{i}(t) f_i )$
for   $X^{\mathbb{\rom4}}$ 
 is a Jacobi-Lie   Hamiltonian system.
\\ 
 It is easy to show that
$h={e}^{x}+ \left( x-1 \right) y+z-1$
is a t-independent constant of motion.
One can  check that
$$\lbrace   h\, ,f_\alpha\rbrace_{\Lambda_{\mathbb{\rom4}}, E_{\mathbb{\rom4}}}=0,\qquad \alpha=1,2.$$
Then, the function $h$ always Jacobi commutes with the whole Lie algebra $Lie(\lbrace f_t \rbrace _{ t\in  \mathbb{R}} ,\lbrace.,.\rbrace_{\Lambda_{\mathbb{\rom4}}, E_{\mathbb{\rom4}}}),$ as expected.

 By applying
proposition \ref{pro},
 $X_h={\Lambda}^{\#}(dh)+h E$   must be a Lie symmetry for this system. A short calculation  shows that
$X_h=xe^x\partial_x+  xye^x\partial_y -x(xye^x+e^{2x})\partial_z.$
It is easy to check that $X_h$ commutes with $X_1^H, X_2^H,$ and subsequently with
every $X_{f_{t}}, $ with $t \in \mathbb{R},$ i.e. $X_h$ is a Lie symmetry for $X^{\mathbb{\rom4}}$. \\

%%%%%%%%%%%%%%%%%%%%%%%%%
{\bf Example 4.}
Real three-dimensional bi-r-matrix Jacobi-Lie bialgebra
$(( \rom{6}_0,\tilde{X}^3),(\rom{3}.ix,-X_1))${[\ref{ref2}]}:

Let us Consider
  the  Lie group 
$\mathbb{\rom{6}}_0$  related to Lie algebra
$\rom{6}_0$ and  the Jacobi structure given by 
\begin{equation*}\label{}
\Lambda_{\mathbb{\rom{6}}_0}=(1+y-e^{-z})\partial _x\wedge \partial _y+e^{-z}\sinh(z)\partial _x\wedge \partial _z+(1-e^{-z}\cosh(z))\partial _y\wedge \partial _z,\quad
E_{\mathbb{\rom{6}}_0}=\partial _x.
\end{equation*}
Then one can show  that 
 $$
[\Lambda_{\mathbb{\rom{6}}_0}, \;\Lambda_{\mathbb{\rom{6}}_0}]=2(1-e^{-z}cosh(z))\partial _x\wedge \partial_y\wedge\partial _z=2E_{\mathbb{\rom{4}}}\wedge \Lambda_{\mathbb{\rom{4}}},\quad  \quad  [E_{\mathbb{\rom{6}}_0},\; \Lambda_{\mathbb{\rom{6}}_0}]=0.$$
so,
$( \mathbb{\rom{6}}_0,  \Lambda_{\mathbb{\rom{6}}_0}, E_{\mathbb{\rom{6}}_0} )$
is a Jacobi manifold.

A  simple calculation shows that
$
X_1^H
 =\partial _x
$
and 
$X_2^H
= x\partial _x+ (1+y-{e}^{-z})\partial _y+{e}^{-z}\sinh \left( z \right) \partial _z
$
span the   Lie algebra $,A_2,$ of Hamiltonian vector fields
on $ \mathbb{\rom 6}_0.$

The system on $ \mathbb{\rom 6}_0$ can be written as
\begin{equation}
\frac{d\mathbb{\zeta}_0}{dt}=\sum_{i=1}^2 b_{i}(t) X_i^H(\mathbb{\zeta}_0),\qquad\qquad \forall\mathbb{\zeta}_0\in  \mathbb{\rom6}_0,
\end{equation}
for arbitrary t-dependent functions $ b_{i}(t).$

Since the associated  t-dependent vector field
$X^{\mathbb{\rom6}_0}=\sum_{i=1}^2 b_{i}(t) X_i^H$ takes values  in the Lie algebra $A_2$ 
,that  is,\;$[X_1^H, X_2^H]=X_1^H$
 , then $X^{\mathbb{\rom6}_0}$ is a Lie system.
 
 We now prove that
 $( \mathbb{\rom6}_0,  \Lambda_{\mathbb{\rom6}_0}, E_{\mathbb{\rom6}_0}, X^{\mathbb{\rom6}_0})$ 
 is a Jacobi-Lie  system. As a matter of fact, $X_1^H$ and $ X_2^H$   are Hamiltonian vector fields
relative to 
 $( \mathbb{\rom6}_0,  \Lambda_{\mathbb{\rom6}_0}, E_{\mathbb{\rom6}_0} )$
  with  Hamiltonian 
 functions     
 $f_1=1$
and
$f_2=x $ respectively. Subsequently,
  $( \mathbb{\rom6}_0,  \Lambda_{\mathbb{\rom6}_0}, E_{\mathbb{\rom6}_0},,X^{\mathbb{\rom6}_0 })$ 
 is a Jacobi-Lie  system.
 
Using the Lie bracket  induced by
 $ \Lambda_{ \mathbb{\rom6}_0}$  and $E_{ \mathbb{\rom6}_0}$ of Lie group $ \mathbb{\rom6}_0$, 
 we can write
$\lbrace f_1,f_2\rbrace_{\Lambda_{ \mathbb{\rom6}_0,  E_{ \mathbb{\rom6}_0} }}= f_1;$
therefore,
$( \mathbb{\rom6}_0,  \Lambda_{\mathbb{\rom6}_0}, E_{\mathbb{\rom6}_0},f=\sum_{i=1}^2 b_{i}(t) f_i )$
for   $X^{\mathbb{\rom6}_0}$ 
 is a Jacobi-Lie   Hamiltonian system.
\\ 
 It is easy to show  that
$h={{\rm e}^{2\,z}}-1$
is a t-independent constant of motion.
One can   check that
$$\lbrace   h\, ,f_\alpha\rbrace_{\Lambda_{\mathbb{\rom6}_0}, E_{\mathbb{\rom6}_0}}=0,\qquad \alpha=1,2.$$
Then, the function $h$ always Jacobi commutes with the whole Lie algebra $Lie(\lbrace f_t \rbrace _{ t\in  \mathbb{R}} ,\lbrace.,.\rbrace_{ \Lambda_{\mathbb{\rom6}_0}, E_{\mathbb{\rom6}_0}}),$ as expected.

 By applying
proposition \ref{pro},
 $X_h={\Lambda}^{\#}(dh)+h E$   must be a Lie symmetry for this system. A short calculation  shows that
$X_h=(1-{{\rm e}^{2\,z}})\partial y .$
It is easy to check that $X_h$ commutes with $X_1^H, X_2^H,$ and subsequently with
every $X_{f_{t}}, $ with $t \in \mathbb{R},$ i.e. $X_h$ is a Lie symmetry for $X^{\mathbb{\rom6}_0}$.
%%%%%%%%%%%%%%%%%%%%%%%%%

\section{Concluding remarks}
%Two examples of Jacobi-Lie  Hamiltonian systems on
% Lie  groups
% Heisenberg  and 
%$SL(2, \mathbb{R})$
 %have  given by Herranz, Lucas and Sard$\acute{o}$n \textcolor{red}{[\ref{ref9}]}.
 %Notice that 
%in  [\ref{ref9}] 
%the Lie algebra is spanned  by left and  right invariant vector fields on  Lie group.
%Nevertheless, in this paper we cannot be used  the  left and  right invariant vector fields on  Lie group.

Using the realizations \textcolor{red}{ [\ref{realiz}]}  of the complete list of Jacobi structures on real two and three- dimensional Jacobi-Lie
groups \textcolor{red}{ [\ref{ref2}]}, we attained Hamiltonian vector fields and we   achieved Jacobi-Lie Hamiltonian systems on real
low dimensional Jacobi-Lie groups.
Then we presented the Lie symmetries of these Jacobi-Lie Hamiltonian systems.

%On the other
%hand one can investigate
 %Jacobi-Lie  Hamiltonian system 
 %over the  dual space of   $\mathfrak{g}$
%by obtaning the
%Jacobi structure
 %over  $\mathfrak{g^*}.$\\

\newpage
{\small {\bf Table 1}}: {\small
 vector field $E$ and 2-vector $\Lambda$  related to real two-dimensional bi-r-matrix Jacobi-Lie bialgebras.}\\
    \begin{tabular}{|l| p{10.76cm}|   }
    \hline\hline
  $( (\mathfrak{g},\phi_0),(\mathfrak{g^*}, X_0)) $  & vector field $E$ and 2-vector $\Lambda$  \\
 &\\
\hline
&\\

 $(( A_1,\tilde{X}^1),(A_1, X_2)) $  & $E=-\partial _y$\\
 &\\
 &$\Lambda=(1-e^{-x})\partial _x\wedge \partial _y$\\
  &\\
\hline
&\\
$(( A_2,b\tilde{X}^2),(A_2.i, -bX_1)) $  &$E=b\partial _x$\\
&\\
&$\Lambda=(1-e^{-(b+1)y})\partial _x\wedge \partial _y$\\

 &\\
 \hline
&\\
 $(( A_1,0),(A_2, -X_2)) $  & $E=\partial _y$\\
 &\\
 &$\Lambda=0$\\
\hline

        \end{tabular}
  
  \vspace{2cm}
%%%%%%%%%%%%%%%%%%%%%%%%%%%%%%%%%%%%%%%%%%%%

{\small {\bf Table 2}}: {\small
vector field $E$ and 2-vector $\Lambda$  related to real three-dimensional bi-r-matrix Jacobi-Lie bialgebras.} \\
    \begin{tabular}{|l| p{10cm}|  }
    \hline\hline
  $( (\mathfrak{g},\phi_0),(\mathfrak{g^*}, X_0)) $  & vector field $E$ and 2-vector $\Lambda$  \\
 &\\
\hline
&\\
 $(( \rom{1},-\tilde{X}^2+\tilde{X}^3),(\rom{3}, -2X_1)) $  & $E=2\partial _x$\\
 &\\
 &$\Lambda=(-1+e^{y-z})\partial _x\wedge \partial _y+(1-e^{y-z})\partial _x\wedge \partial _z$\\
 
  &\\
\hline
&\\
$(( \rom{2},0),(\rom{1}, X_1)) $  &$E=-\partial _x$\\
&\\
&$\Lambda=-y\partial _x\wedge \partial _y-z\partial _x\wedge \partial _z$\\

  &\\
\hline
&\\
$(( \rom{2},0),(\rom{5}, bX_1)) $  &$E=-b\partial _x$\\
&\\
&$\Lambda=-(1+b)y\partial _x\wedge \partial _y-(1+b)z\partial _x\wedge \partial _z$\\

 &\\
\hline
&\\
$(( \rom{3},b\tilde{X}^1),(\rom{3}.i,- X_2+X_3)) $  &$E=\partial _y-\partial _z$\\
&\\
&$\begin{array}{l}\Lambda=-\frac{1}{b}(1-e^{-bx})\partial _x\wedge \partial _y+\frac{1}{b}(1-e^{-bx})\partial _x\wedge \partial _z\\\qquad-\frac{2}{b} (y+z)e^{-bx}\partial _y\wedge \partial _z\end{array}$\\

 &\\
  \hline

        \end{tabular}
 %%%%%%%%%%%%%%%%%%%%%%%%%%%%%%%%%       

   \newpage  
      {\small {\bf Table 2}}: {\small
(Continued.)}\\
    \begin{tabular}{|l| l|  }
    \hline\hline
  $( (\mathfrak{g},\phi_0),(\mathfrak{g^*}, X_0)) $  & vector field $E$ and 2-vector $\Lambda$  \\

   &\\
  \hline
&\\
$(( \rom{3},-b\tilde{X}^2+b\tilde{X}^3),(\rom{3}.iv,b X_1)) $  &$E=-b\partial _x$\\
&\\
&$\begin{array}{l}\Lambda=\frac{1}{2}(1-e^{b(y-z)})\partial _x\wedge \partial _y-\frac{1}{2}(1-e^{b(y-z)})\partial _x\wedge \partial _z\\ \qquad+(y+z)e^{b(y-z)}\partial _y\wedge \partial _z\end{array}$\\

 &\\
 \hline
&\\
$(( \rom{3},0),(\rom{3}.v,\frac{1}{2}X_2-\frac{1}{2}X_3)) $  &$E=-\frac{1}{2}\partial _y+\frac{1}{2}\partial _z$\\
&\\
&$\begin{array}{l}\Lambda=\dfrac{e^{2x}-1}{4}\partial _x\wedge \partial _y+\dfrac{e^{2x}-1}{4}\partial _x\wedge \partial _z
\\ \qquad-\frac{1}{2}(y+z+1-e^{2x})\partial _y\wedge \partial _z\end{array}$\\

 &\\
\hline
&\\
$(( \rom{3},0),(\rom{4}.iv,X_2-X_3)) $  &$E=-\partial _y+\partial _z$\\
&\\
&$\Lambda=\frac{1}{2}(e^{2x}-1)\partial _y\wedge \partial _z$\\

 &\\
\hline
&\\
$(( \rom{3},-2\tilde{X}^1),(\rom{5}.i,-X_2-X_3)) $  &$E=e^{2x}\partial _y+e^{2x}\partial _z$\\
&\\
&$\Lambda=0$\\
 &\\
\hline

\hline
&\\
$(( \rom{3},0),(\rom{6}_0.iv,X_2-X_3)) $  &$E=-\partial _y+\partial _z$\\
&\\
&$\Lambda=-2(y+z)\partial _y\wedge \partial _z$\\

  &\\
\hline
&\\
$(( \rom{3},0),(\rom{6}_a.vii,-X_2+X_3)) $  &$E=\partial _y-\partial _z$\\
&\\
&$\Lambda=\dfrac{-2}{a-1}(y+z)\partial _y\wedge \partial _z$\\ 
  
 &\\
\hline
&\\
$(( \rom{3},0),(\rom{6}_a.viii,-X_2+X_3))  $ &$E=\partial _y-\partial _z$\\
&\\
&$\Lambda=\dfrac{2}{a+1}(y+z)\partial _y\wedge \partial _z$
\\
 &\\
\hline
&\\
$(( \rom{4},-\tilde{X}^1),(\rom{3}.vi,-X_2-X_3)) $  &$E=e^{x}\partial _y+e^{x}(1-x)\partial _z$\\
&\\
&$\Lambda=-xe^{x}\partial _x\wedge \partial _z+e^{x}(z-y-1+e^{x})\partial _y\wedge \partial _z$
\\

 &\\

\hline

        \end{tabular}
     \newpage
   {\small {\bf Table 2}}: {\small
(Continued.)}\\
    \begin{tabular}{|l| p{7cm}|  }
    \hline\hline
  $( (\mathfrak{g},\phi_0),(\mathfrak{g^*}, X_0)) $  & vector field $E$ and 2-vector $\Lambda$  \\
  &\\
  \hline
&\\
$(( \rom{4},-\tilde{X}^1),(\rom{4}.i,-bX_3)) $  &$E=be^{x}\partial _z$\\
&\\
&$\Lambda=e^{x}(e^{x}-1)\partial _y\wedge \partial _z$
\\
 &\\
  \hline
&\\
$(( \rom{4},-\tilde{X}^1),(\rom{4}.ii,-bX_3)) $  &$E=be^{x}\partial _z$\\
&\\
&$\Lambda=-e^{x}(e^{x}-1)\partial _y\wedge \partial _z$
\\
 &\\
\hline
&\\
$(( \rom{4},-\tilde{X}^1),(\rom{6}_0.i,-X_3)) $  &$E=e^{x}\partial _z$\\
&\\
&$\Lambda=-2ye^{x}\partial _y\wedge \partial _z$
\\
 &\\
\hline
&\\
$(( \rom{4},-\tilde{X}^1),(\rom{6}_a.i,-X_3)) $  &$E=e^{x}\partial _z$\\
&\\
&$\Lambda=\dfrac{2}{a-1}ye^{x}\partial _y\wedge \partial _z$
\\
 &\\

  \hline
&\\
$(( \rom{4},-\tilde{X}^1),(\rom{6}_a.ii,-X_3)) $  &$E=e^{x}\partial _z$\\
&\\
&$\Lambda=-\dfrac{2}{a+1}ye^{x}\partial _y\wedge \partial _z$
\\
 &\\
 \hline
&\\
 $(( \rom{5},-2\tilde{X}^1),(\rom{5}.i, -2X_2-2X_3)) $  & $E=2e^{x}\partial _y+2e^{x}\partial _z$\\
 &\\
 &$ \begin{array}{l}\Lambda=e^{x}(1-e^{x})\partial _x\wedge \partial _y+e^{x}(1-e^{x})\partial _x\wedge \partial _z\\ \qquad+e^{2x}(z-y)\partial _y\wedge \partial _z\end{array}$\\
 
  &\\
\hline
&\\
 $(( \rom{5},-\frac{2a}{a-1}\tilde{X}^1),(\rom{6}_a.i, -\frac{2a}{a-1}X_3)) $  & $E=\frac{2a}{a-1}e^{x}\partial _z$\\
 &\\
 &$\Lambda=(e^{x}-e^{\frac{2a}{a-1}x})\partial _x\wedge \partial _z-ye^{\frac{2a}{a-1}x}\partial _y\wedge \partial _z$\\
 
  &\\
\hline
&\\
 $(( \rom{5},-\frac{2a}{a+1}\tilde{X}^1),(\rom{6}_a.ii, -\frac{2a}{a+1}X_3))$  & $E=\frac{2a}{a+1}e^{x}\partial _z$\\
 &\\
 &$\Lambda=(e^{x}-e^{\frac{2a}{a+1}x})\partial _x\wedge \partial _z-ye^{\frac{2a}{a+1}x}\partial _y\wedge \partial _z$\\
 
 &\\
  
\hline

        \end{tabular}
        
     %%%%%%%%%%%%%%%%%%%%%%%%%
     \newpage
     {\small {\bf Table 2}}: {\small
(Continued.)}\\
    \begin{tabular}{|l| l|   }
\hline\hline
  $( (\mathfrak{g},\phi_0),(\mathfrak{g^*}, X_0)) $  & vector field $E$ and 2-vector $\Lambda$  \\
  &\\
\hline

&\\
$(( \rom{6}_0,\tilde{X}^3),(\rom{3}.vii,- X_1-X_2)) $  &$E=\partial _x+\partial _y$\\
&\\
&$ \begin{array}{l}\Lambda=(y-x)\partial _x\wedge \partial _y+(1-e^{-2z})\partial _x\wedge \partial _z\\ \qquad+(1-e^{-2z})\partial _y\wedge \partial _z\end{array}$
\\
 &\\
\hline
&\\
$(( \rom{6}_0,\tilde{X}^3),(\rom{3}.ix,- X_1))  $  &$E=\partial _x$\\
&\\
&$ \begin{array}{l}   \Lambda=(1+y-e^{-z})\partial _x\wedge \partial _y+e^{-z}\sinh(z)\partial _x\wedge \partial _z\\ \qquad+(1-e^{-z}\cosh(z))\partial _y\wedge \partial _z\end{array} $
\\
 &\\
\hline
&\\
$(( \rom{6}_0,\tilde{X}^3),(\rom{6}_0.ii,- X_1+X_2)) $  &$E=\partial _x-\partial _y$\\
&\\
&$\Lambda=2(x+y)\partial _x\wedge \partial_y$
\\
 &\\
\hline
&\\
$(( \rom{6}_0,-2\tilde{X}^3),(\rom{6}_0.ii,2X_1-2X_2)) $  &$E=-2\partial _x+2\partial _y
$\\
&\\
&$\begin{array}{l}\Lambda=-(x+y)\partial _x\wedge \partial _y+(1-e^{3z})\partial _x\wedge \partial _z\\ \qquad-(1-e^{3z})\partial _y\wedge \partial _z\end{array}$
\\
 &\\
\hline
&\\
$(( \rom{6}_0,\tilde{X}^3),(\rom{6}_a.iii,-X_1+X_2)) $  &$E=\partial _x-\partial _y$\\
&\\
&$\Lambda=-\frac{2}{a-1}(x+y)\partial _x\wedge \partial _y$
\\
 &\\
\hline
&\\
$(( \rom{6}_0,\tilde{X}^3),(\rom{6}_a.iv,-X_1+X_2)) $  &$E=\partial _x-\partial _y$\\
&\\
&$\Lambda=\frac{2}{a+1}(x+y)\partial _x\wedge \partial _y$
\\
 &\\
\hline
&\\
$(( \rom{6}_0,\frac{2}{a-1}\tilde{X}^3),(\rom{6}_a.iii,-\frac{2}{a-1}(X_1-X_2))) $  &$E=\frac{2}{a-1}\partial _x-\frac{2}{a-1}\partial _y$\\
&\\
&$\begin{array}{l}\Lambda=-(x+y)\partial _x\wedge \partial _y+(1-e^{\frac{a-3}{a-1}z})\partial _x\wedge \partial _z\\ \qquad-(1-e^{\frac{a-3}{a-1}z})\partial _y\wedge \partial _z
\end{array}$
\\
 &\\
\hline

&\\
$(( \rom{6}_0,-\frac{2}{a+1}\tilde{X}^3),(\rom{6}_a.iv,\frac{2}{a+1}(X_1-X_2))) $  &$E=-\frac{2}{a+1}\partial _x+\frac{2}{a+1}\partial _y$\\
&\\
&$\begin{array}{l}  \Lambda=-(x+y)\partial _x\wedge \partial _y+(1-e^{\frac{a+3}{a+1}z})\partial _x\wedge \partial _z\\ \qquad-(1-e^{\frac{a+3}{a+1}z})\partial _y\wedge \partial _z
\end{array}$\\
\hline
        \end{tabular}
        \newpage
     {\small {\bf Table 2}}: {\small
(Continued.)}\\
    \begin{tabular}{|l| l|   }
\hline\hline
  $( (\mathfrak{g},\phi_0),(\mathfrak{g^*}, X_0)) $  & vector field $E$ and 2-vector $\Lambda$  \\
  &\\
  \hline
  &\\
 $(( \rom{6}_a,-(a-1)\tilde{X}^1),(\rom{3}.ii, -\frac{a-1}{a+1}(X_2+X_3))) $  & $E=\frac{a-1}{a+1}e^{(a+1)x}\partial _y+\frac{a-1}{a+1}e^{(a+1)x}\partial _z$\\
 &\\
&$\begin{array}{l}   \Lambda=\frac{1}{a+1}(e^{(a+1)x}-e^{(a-1)x})\partial _x\wedge \partial _y  \\ \qquad+\frac{1}{a+1}(e^{(a+1)x}-e^{(a-1)x})\partial _x\wedge \partial _z\\
\qquad +\frac{1}{a+1}e^{(a-1)x}(1-a)(y-z)\partial _y\wedge \partial _z\end{array}$\\
  &\\
\hline
  &\\
 $(( \rom{6}_a, -(a+1)\tilde{X}^1),(\rom{3}.v, \frac{1}{a-1}(X_2-aX_3))) $  &
 $\begin{array}
{l}E=\frac{1}{a-1}e^{ax}(-\cosh(x)+a\sinh(x))\partial _y\\ \qquad+\frac{1}{a-1}e^{ax}(-\sinh(x)+a\cosh(x))\partial _z
 \end{array}$
  \\
 &
$\begin{array}{l}\Lambda=\frac{1}{a-1}e^{ax}\sinh(x)\partial _x\wedge \partial _y\\ \qquad+\frac{1}{a-1}(e^{ax}\cosh(x)-e^{(a+1)x})\partial _x\wedge \partial _z\\ \qquad+\frac{1}{a-1}(e^{2ax}-e^{(a+1)x}(1+ay+z))\partial _y\wedge \partial _z\end{array}$
 \\
 &\\
   &\\
\hline
&\\
 $(( \rom{6}_a,-(a-1)\tilde{X}^1),(\rom{3}.v, \frac{1}{a+1}(X_2-aX_3)))$  & 
  $\begin{array}{l}E=\frac{1}{a+1}e^{ax}(-\cosh(x)+a\sinh(x))\partial _y\\\qquad+\frac{1}{a+1}e^{ax}(-\sinh(x)+a\cosh(x))\partial _z\end{array}$\\
 &\\
 & $\begin{array}{l}\Lambda=\frac{1}{a+1}e^{ax}\sinh(x)\partial _x\wedge \partial _y\\\qquad+\frac{1}{a+1}(e^{ax}\cosh(x)-e^{(a-1)x})\partial _x\wedge \partial _z\\\qquad+\frac{1}{a+1}(e^{2ax}-e^{(a-1)x}(1+ay+z))\partial _y\wedge \partial _z\end{array}$\\
 &\\
 &\\
\hline
&\\
&\\
$(( \rom{6}_a,-(a+1)\tilde{X}^1),(\rom{3}.x,-\frac{a+1}{a-1}( X_2-X_3))) $  &$E=\frac{a+1}{a-1}e^{(a-1)x}\partial _y-\frac{a+1}{a-1}e^{(a-1)x}\partial _z$\\
&\\
&$\begin{array}{l} \Lambda=\frac{1}{a-1}(e^{(a-1)x}-e^{(a+1)x})\partial _x\wedge \partial _y\\
\qquad +\frac{1}{a-1}(e^{(a+1)x}-e^{(a-1)x})\partial _x\wedge \partial _z\\
 \qquad+\frac{1}{a-1}e^{(a+1)x}(1+a)(y+z)\partial _y\wedge \partial _z \end{array}$
\\
&\\
\hline
&\\
$(( \rom{6}_a,-(a+1)\tilde{X}^1),(\rom{6}_b.v,-X_2-X_3))  $  &$E=e^{(a+1)x}\partial _y+e^{(a+1)x}\partial _z$\\
&\\
&$\Lambda=\frac{2}{(b-1)}e^{(a+1)x}(y-z)\partial _y\wedge \partial _z$
\\
&\\
\hline

  &\\
$(( \rom{6}_a,-(a+1)\tilde{X}^1),(\rom{6}_b.vi,- X_2+X_3)) $  &$E=e^{(a+1)x}\partial _y+e^{(a+1)x}\partial _z$\\
&\\
&$\Lambda=-\frac{2}{(b+1)}e^{(a+1)x}(y-z)\partial _y\wedge \partial _z$
\\
&\\
\hline

        \end{tabular}
  
  %%%%%%%%%%%%%%%%%%%%%%%%%%%%%%%%%%%%%%%%%%%
  %%%%%%%%%%%%%%%%%%%%%%%%%%%%%%%%%%%%%%
  \newpage
 {\small {\bf Table 2}}: {\small
(Continued.)}\\
    \begin{tabular}{|l| l|  }
    \hline\hline
  $( (\mathfrak{g},\phi_0),(\mathfrak{g^*}, X_0)) $  & vector field $E$ and 2-vector $\Lambda$  \\
  &\\
\hline 
  &\\
  $(( \rom{6}_a,-(a-1)\tilde{X}^1),(\rom{6}_b.vii,-X_2+X_3)) $  &$E=e^{(a-1)x}\partial _y-e^{(a-1)x}\partial _z$\\
&\\
&$\Lambda=-\frac{2}{(b-1)}e^{(a-1)x}(y+z)\partial _y\wedge \partial _z$
\\
&\\
\hline
  $(( \rom{6}_a,-(a-1)\tilde{X}^1),(\rom{6}_b.viii,-X_2+X_3)) $  &$E=e^{(a-1)x}\partial _y-e^{(a-1)x}\partial _z$\\
&\\
&$\Lambda=\frac{2}{(b+1)}e^{(a-1)x}(y+z)\partial _y\wedge \partial _z$
\\
&\\
\hline
&\\
$(( \rom{6}_a,-\frac{2(ab+1)}{b-1}\tilde{X}^1),(\rom{6}_b.v,-\frac{2(ab+1)}
{(a+1)(b-1)}(X_2+X_3)))$ &
$\begin{array}{l}E=\frac{2(ab+1)}{(a+1)(b-1)}e^{(a+1)x}\partial _y \\ \qquad+\frac{2(ab+1)}{(a+1)(b-1)}e^{(a+1)x}\partial _z\end{array} $\\
&\\
&$\begin{array}{l}\Lambda=\frac{1}{a+1}(e^{(a+1)x}-e^{\frac{2(ab+1)}{b-1}x})\partial _x\wedge \partial _y\\
\qquad+\frac{1}{a+1}(e^{(a+1)x}-e^{\frac{2(ab+1)}{b-1}x})\partial _x\wedge \partial _z\\
\qquad+\frac{1}{a+1}e^{\frac{2(ab+1)}{b-1}x}(1-a)(y-z)\partial _y\wedge \partial _z\end{array}$
\\
&\\
\hline
  
  &\\
  $(( \rom{6}_a, -\frac{2(ab-1)}{b+1}\tilde{X}^1),(\rom{6}_b.vi,-\frac{2(ab-1)}{(a+1)(b+1)}(X_2+X_3))) $  &
  $\begin{array}{l} E=\frac{2(ab-1)}{(a+1)(b+1)}e^{(a+1)x}\partial _y\\ \qquad+\frac{2(ab-1)}{(a+1)(b+1)}e^{(a+1)x}\partial _z\end{array} $\\
  &\\
&$\begin{array}{l} \Lambda=\frac{1}{a+1}(e^{(a+1)x}-e^{\frac{2(ab-1)}{b+1}x})\partial _x\wedge \partial _y\\
\qquad+\frac{1}{a+1}(e^{(a+1)x}-e^{\frac{2(ab-1)}{b+1}x})\partial _x\wedge \partial _z\\
\qquad+\frac{1}{a+1}e^{\frac{2(ab-1)}{b+1}x}(1-a)(y-z)\partial _y\wedge \partial _z\end{array} $
\\
&\\  
\hline
&\\
$(( \rom{6}_a, -\frac{2(ab-1)}{b-1}\tilde{X}^1),(\rom{6}_b.vii,-\frac{2(ab-1)}{(a1)(b-1)}(X_2-X_3))) $  &$\begin{array}{l}E=\frac{2(ab-1)}{(a-1)(b-1)}e^{(a-1)x}\partial _y\\ \qquad -\frac{2(ab-1)}{(a-1)(b-1)}e^{(a-1)x}\partial _z\end{array}$\\
&\\
&$\begin{array}{l}\Lambda=\frac{1}{a-1}(e^{(a-1)x}-e^{\frac{2(ab-1)}{b-1}x})\partial _x\wedge \partial _y\\
\qquad +\frac{1}{a-1}(-e^{(a-1)x}+e^{\frac{2(ab-1)}{b-1}x})\partial _x\wedge \partial _z\\
\qquad +\frac{1}{a-1}e^{\frac{2(ab-1)}{b-1}x}(1+a)(y+z)\partial _y\wedge \partial _z\end{array}$\\
&\\
\hline
&\\
$(( \rom{6}_a, -\frac{2(ab+1)}{b+1}\tilde{X}^1),(\rom{6}_b.viii,-\frac{2(ab+1)}{(a-1)(b+1)}(X_2-X_3))) $  &$\begin{array}{l}E=\frac{2(ab+1)}{(a-1)(b+1)}e^{(a-1)x}\partial _y\\ \qquad-\frac{2(ab+1)}{(a-1)(b+1)}e^{(a-1)x}\partial _z\end{array}$\\
&\\
&$\begin{array}{l}\Lambda=\frac{1}{a-1}(e^{(a-1)x}-e^{\frac{2(ab+1)}{b+1}x})\partial _x\wedge \partial _y\\
\qquad 
+\frac{1}{a-1}(-e^{(a-1)x}+e^{\frac{2(ab+1)}{b+1}x})\partial _x\wedge \partial _z\\
\qquad +\frac{1}{a-1}e^{\frac{2(ab+1)}{b+1}x}(1+a)(y+z)\partial _y\wedge \partial _z\end{array}
$
\\
&\\
\hline

  \hline
        \end{tabular}
%%%%%%%%%%%%%%%%%%%%%%%%%%%%%%%%        
\newpage
{\small {\bf Table 3}}: {\small
 vector field $E$ and 2-vector $\Lambda$  related to real three dimensional coboundary Jacobi-Lie bialgebras.}\\
    \begin{tabular}{|l| p{9.5cm}|  }
    \hline\hline
  $( (\mathfrak{g},\phi_0),(\mathfrak{g^*}, X_0)) $  & vector field $E$ and 2-vector $\Lambda$  \\
  &\\
\hline
&\\
 $(( \rom{1},0),(\rom{5}, -X_1)) $  & $E=\partial _x$\\
 &\\
 &$\Lambda=0$\\
  &\\
\hline
&\\
$(( \rom{2},-\tilde{X}^2+\tilde{X}^3),(\rom{3}, -2X_1)) $  &$E=2\partial _x$\\
&\\
&$\Lambda=(-1+e^{y-z})\partial _x\wedge \partial _y+(1-e^{y-z})\partial _x\wedge \partial _z$
\\
  &\\
\hline
&\\
$(( \rom{3},-\tilde{X}^2+\tilde{X}^3),(\rom{3}.iii, X_2+X_3)) $  &$E=-e^{2x}\partial _y-e^{2x}\partial _z$\\
&\\
&$\Lambda=(-e^{2x}+e^{y-z})\partial _y\wedge \partial _z$
\\
 &\\
\hline
&\\
$(( \rom{3},0),(\rom{3}.x,- X_2+X_3)) $  &$E=\partial _y-\partial _z$\\
&\\
&$\Lambda=(y+z)\partial _y\wedge \partial _z$
\\
 &\\
\hline
&\\
$(( \rom{4},-\tilde{X}^1),(\rom{3}.v,-X_3)) $  &$E=e^{x}\partial _z$\\
&\\
&$\Lambda=-(y+1-e^{x})e^{x}\partial _y\wedge \partial _z$
\\
 &\\
\hline
&\\
$(( \rom{4},-2\tilde{X}^1),(\rom{5}.ii,-2X_3)) $  &$E=2e^{x}\partial _z$\\
&\\
&$\Lambda=(e^{x}-e^{2x})\partial _x\wedge \partial _z-ye^{2x}\partial _y\wedge \partial _z$
\\
 &\\
\hline
&\\
$(( \rom{4},-\frac{2a}{a-1}\tilde{X}^1),(\rom{6}_a.i,-\frac{2a}{a-1}X_3)) $  &$E=\frac{2a}{a-1}e^{x}\partial _z$\\
&\\
&$\Lambda=(e^{x}-e^{\frac{2a}{a-1}x})\partial _x\wedge \partial _z-ye^{\frac{2a}{a-1}x}\partial _y\wedge \partial _z$
\\
 &\\
\hline
&\\
$(( \rom{4},-\frac{2a}{a+1}\tilde{X}^1),(\rom{6}_a.ii,-\frac{2a}{a+1}X_3)) $  &$E=\frac{2a}{a+1}e^{x}\partial _z$\\
&\\
&$\Lambda=(e^{x}-e^{\frac{2a}{a+1}x})\partial _x\wedge \partial _z-ye^{\frac{2a}{a+1}x}\partial _y\wedge \partial _z$
\\

 \hline
        \end{tabular}
        \newpage
       {\small {\bf Table 3}}: {\small
(Continued.)}\\
    \begin{tabular}{|l| p{7cm}|   }
    \hline\hline
  $( (\mathfrak{g},\phi_0),(\mathfrak{g^*}, X_0)) $  & vector field $E$ and 2-vector $\Lambda$  \\
 &\\
 \hline
 &\\
 $(( \rom{5},-\tilde{X}^1),(\rom{6}_0.i,-X_3)) $  &$E=e^{x}\partial _z$\\
&\\
&$\Lambda=-2ye^{x}\partial _y\wedge \partial _z$
\\
 &\\
 \hline
 &\\
$(( \rom{5},-\tilde{X}^1),(\rom{6}_a.i,-X_3)) $  &$E=e^{x}\partial _z$\\
&\\
&$\Lambda=\frac{2}{a-1}ye^{x}\partial _y\wedge \partial _z$
\\
 &\\
\hline
&\\
$(( \rom{5},-\tilde{X}^1),(\rom{6}_a.ii,-X_3)) $  &$E=e^{x}\partial _z$\\
&\\
&$\Lambda=-\frac{2}{a+1}ye^{x}\partial _y\wedge \partial _z$
\\
 &\\
\hline
&\\
$(( \rom{6}_0,\tilde{X}^3),(\rom{3}.viii,-X_1+X_2)) $  &$E=\partial _x-\partial _y$\\
&\\
&$\Lambda=(x+y)\partial _x\wedge \partial _y$
\\
 &\\
\hline
&\\
$(( \rom{6}_a,-(a+1)\tilde{X}^1),(\rom{3}.ii,-X_2-X_3)) $  &$E=e^{(a+1)x}\partial _y+e^{(a+1)x}\partial _z$\\
&\\
&$\Lambda=-(y-z)e^{(a+1)x}\partial _y\wedge \partial _z$
\\
 &\\
\hline
&\\
$(( \rom{6}_a,-(a-1)\tilde{X}^1),(\rom{3}.x,-X_2+X_3)) $  &$E=e^{(a-1)x}\partial _y+e^{(a-1)x}\partial _z$\\
&\\
&$\Lambda=(y+z)e^{(a-1)x}\partial _y\wedge \partial _z$
\\
 &\\
\hline
        \end{tabular}

    \newpage

\end{document}